\documentclass{article}

\usepackage{microtype}

\usepackage{amsthm}
\usepackage{euscript}
\usepackage{amsopn,amsfonts}
\usepackage{float,xspace}
\usepackage{amsmath}
\usepackage{amssymb}
\usepackage{algorithm2e}
\usepackage{paralist}
\usepackage{hyperref}
\usepackage{authblk}
\usepackage{fullpage}
%
\usepackage{graphicx}
\usepackage{url}
\def\triangleq{\overset{\mathrm{def}}{=}}

\let\eps=\varepsilon

%

\theoremstyle{plain}
%
\newtheorem{theorem}{Theorem}
\newtheorem{claim}[theorem]{Claim}
\newtheorem{notation}[theorem]{Notation}
\newtheorem{note}[theorem]{Note}
\newtheorem{proposition}[theorem]{Proposition}
\newtheorem{mycorollary}[theorem]{Corollary}
\newtheorem{mylemma}[theorem]{Lemma}


%
\newcommand{\R}{\mathbb{R}}



\newcommand{\mkvc}{\textsf{Vertex Cover}\xspace}
\newcommand{\I}{\ensuremath{{\cal I}}}
%

%



\newcommand{\ind}[1]{\mathbf{e}_{#1}}


%


\def\ff{\setfam{F}}

\newcommand{\setfam}[1]{\EuScript{#1}}
%

\newcommand{\probs}[2]{\mathrm{Prob}_{#1}\big[ #2 \big]}
\newcommand{\expcts}[2]{\mathbb{E}_{#1}\big[ #2 \big]}

\newcommand{\xvec}{\mathrm{x}}


\def\sm{\setminus}
\def\indset{\operatorname{IS}}
\def\rindset{\operatorname{rel-IS}}
\def\vercov{\operatorname{VC}}
\def\rvercov{\operatorname{rel-VC}}
\def\minvc{\mkvc}

\begin{document}

\title{The Hardness of Approximation of Euclidean k-means}

\author[1]{Pranjal Awasthi}
\author[2]{Moses Charikar}
\author[3]{Ravishankar Krishnaswamy}
\author[4]{Ali Kemal Sinop}
\affil[1]{Computer Science Department,\\
  Princeton University,\\
  \texttt{pawasthi@cs.cmu.edu}}
\affil[2]{Computer Science Department\\
  Princeton University\\
  \texttt{moses@cs.princeton.edu}}
\affil[3]{Microsoft Research\\
  \texttt{ravishan@cs.cmu.edu}}
\affil[4]{Simons Institute for Theory of Computing \\
	University of California, Berkeley\\
	\texttt{asinop@cs.cmu.edu}}

\maketitle

\begin{abstract}
  The Euclidean $k$-means problem is a classical problem that has been
  extensively studied in the theoretical computer science, machine
  learning and the computational geometry communities. In this
  problem, we are given a set of $n$ points in Euclidean space $\R^d$,
  and the goal is to choose $k$ center points in $\R^d$ so that the
  sum of squared distances of each point to its nearest center is
  minimized.
  The best approximation algorithms for this problem include a
  polynomial time constant factor approximation for general $k$ and a
  $(1+\epsilon)$-approximation which runs in time poly(n)
  exp($k/\epsilon$). %
  At the other extreme, the only known computational complexity result
  for this problem is NP-hardness~\cite{Aloise09}.  The main
  difficulty in obtaining hardness results stems from the Euclidean
  nature of the problem, and the fact that any point in $\R^d$ can be
  a potential center. This gap in understanding left open the
  intriguing possibility that the problem might admit a PTAS for all
  $k,d$.

  In this paper we provide the first hardness of approximation for the
  Euclidean $k$-means problem. Concretely, we show that there exists a
  constant $\epsilon > 0$ such that it is NP-hard to approximate the
  $k$-means objective to within a factor of $(1+\epsilon)$.  We show
  this via an efficient reduction from the vertex cover problem on
  \emph{triangle-free graphs}: given a triangle-free graph, the goal
  is to choose the fewest number of vertices which are incident on all
  the edges.
  Additionally, we give a proof that the current best hardness results
  for vertex cover can be carried over to triangle-free graphs. To
  show this we transform $G$, a known hard vertex cover instance, by
  taking a graph product with a suitably chosen graph $H$, and showing
  that the size of the (normalized) maximum independent set is almost
  exactly preserved in the product graph using a spectral analysis,
  which might be of independent interest.
\end{abstract}

\section{Introduction}
Clustering is the task of partitioning a set of items such as web
pages, protein sequences etc. into groups of related items. This is a
fundamental task in machine learning, information retrieval,
computational geometry, computer vision, data visualization and many
other domains. In many applications, clustering is often used as a
first step toward other fine grained tasks such as
classification. Needless to say, the problem of clustering has
received significant attention over the years and there is a large
body of work on both the applied and the theoretical aspects of the
problem~\cite{AGK+04,ARR98,CGTS99,FKKR03,jms06,Kanungo02,KSS04,Lloyd06,BBG09,
  Lloyd, top10}.
A common way to approach the task of clustering is to map the set of
items into a metric space where distances correspond to how different
two items are from each other. Using this distance information, one
then tries to optimize an objective function to get the desired
clustering. Among the most commonly used objective function used in
the clustering literature is the $k$-means objective function. In the
$k$-means problem, the input is a set $S$ of $n$ data points in
Euclidean space $\R^d$, and the goal is to choose $k$ center points
$C^* = \{c_1, c_2, \ldots, c_k\}$ from $\R^d$ so as to minimize
$\Phi = \sum_{x \in S} \min_i \| x - c(x) \|^2$, where $c(x) \in C^*$
is the center closest to $x$.
Aside from being a natural clustering objective, an important
motivation for studying this objective function stems from the fact
that a very popular and widely used heuristic %
(appropriately called the \emph{$k$-means heuristic}~\cite{Lloyd}) %
attempts to minimize this $k$-means objective function.

While the $k$-means heuristic is very much tied to the $k$-means
objective function, there are many examples where it converges to a
solution which is far away from the optimal $k$-means solution. This
raises the important question of whether there exist provable
algorithms for the $k$-means problem in general Euclidean space, which
is the focus problem of our paper. Unfortunately though, the
approximability of the problem is not very well understood. From the
algorithmic side, there has been much focus on getting
$(1+\epsilon)$-approximations that run as efficiently as
possible. Indeed, for fixed $k$, Euclidean $k$-means admits a
PTAS~\cite{KSS04, FeldmanMS07}. These algorithms have exponential
dependence in $k$, but only linear dependence in the number of points
and the dimensionality of the space. As mentioned above, there is also
empirical and theoretical evidence for the effectiveness of very
simple heuristics for this problem~\cite{Lloyd06, Lloyd, KK10}. For
arbitrary $k$ and $d$, the best known approximation algorithm for
$k$-means achieves a factor of $9+\epsilon$~\cite{Kanungo02}. In
contrast to the above body of work on getting algorithms for
$k$-means, lower bounds for $k$-means have remained elusive. In fact,
until recently, even NP-hardness was not known for the $k$-means
objective~\cite{Dasgupta08, Aloise09}. This is perhaps due to the fact
that as opposed to many discrete optimization problems, the $k$-means
problem allows one to choose any point in the Euclidean space as a
center. The above observations lead to the following intriguing
possibility

\textit{Is there a PTAS for Euclidean $k$-means for arbitrary $k$ and dimension $d$?}

In this paper we answer this question in the negative and provide the
first hardness of approximation for the Euclidean $k$-means problem.

\begin{theorem} \label{thm:main} There exists a constant $\epsilon>0$
  such that it is NP-hard to approximate the Euclidean $k$-means to a
  factor better than $(1+\epsilon)$.
\end{theorem}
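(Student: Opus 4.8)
The plan is to reduce from \mkvc\ on triangle-free graphs. As established in the other half of this paper, that problem is NP-hard to approximate within some absolute constant $1+\gamma>1$, and it remains so on graphs of bounded maximum degree $\Delta$ (the graph product used there multiplies degrees only by a constant). So fix such a hard instance: a triangle-free graph $G=(V,E)$ with $\deg\le\Delta$ and a threshold $k$, under the promise that its minimum vertex cover $\tau(G)$ is either at most $k$ (YES) or larger than $(1+\gamma)k$ (NO); here $k=\Theta(|V|)$. From $G$ I build a $k$-means instance in $\R^{|V|}$ with one point per edge: for $e=uv\in E$ let $p_e\triangleq\ind{u}+\ind{v}$, and keep the same $k$. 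The guiding picture is that a cluster whose points all share one coordinate $v$ --- a ``star'' of edges through a common vertex $v$ --- is cheap, a partition of $E$ into $k$ such stars is exactly a vertex cover of size $\le k$, and triangle-freeness is precisely what forces a cheap cluster to be (a refinement of) a single star.

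I will use two standard facts: the optimal $1$-means cost of a finite set $C$ is $\tfrac{1}{2|C|}\sum_{p,q\in C}\|p-q\|^2$, and the optimal $k$-means cost of a point set equals the minimum, over partitions into at most $k$ nonempty parts, of the sum of the parts' $1$-means costs (any $k$ centres induce a Voronoi partition whose cost, recomputed with part centroids, is no larger). Since $\|p_e-p_f\|^2$ is $0$, $2$, or $4$ according as $e,f$ coincide, share a vertex, or are disjoint, a one-line computation gives that a cluster $C_i$ consisting of the points $p_e$ for the $m_i$ edges of a subgraph in which there are $A_i$ ordered pairs of distinct incident edges has $1$-means cost $2(m_i-1)-A_i/m_i=2m_i-\tfrac{1}{m_i}\sum_v d_i(v)^2$. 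As $A_i\le m_i(m_i-1)$ always, this is $\ge m_i-1$, so \emph{for any graph} the $k$-means optimum is at least $|E|-k$, with equality in a cluster iff its edges are pairwise incident --- which in a triangle-free graph means they form one star. This gives completeness: if $\tau(G)\le k$ then, since $k\le|E|$ (which may be assumed), one can partition $E$ into exactly $k$ nonempty stars --- take a minimal cover of size $\tau(G)$, assign each edge to one of its endpoints in the cover (each minimal-cover vertex then receives a private edge), and repeatedly split an oversized star into two stars with the same centre --- so the $k$-means optimum is exactly $|E|-k$.

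The crux is soundness. Write the cost of an optimal clustering as $|E|-k'+D$, where $k'\le k$ is the number of nonempty clusters and $D=\sum_i\delta_i$ with deficiency $\delta_i:=\mathrm{cost}(C_i)-(m_i-1)=m_i+1-\tfrac{1}{m_i}\sum_v d_i(v)^2\ge 0$ (here $d_i(v)$ is the degree of $v$ in the subgraph formed by the edges of $C_i$, and $\sum_v d_i(v)^2=2m_i+A_i$). The key structural lemma, and the one place triangle-freeness is essential, is: \emph{in a triangle-free graph, every edge set $F$ has a vertex cover of size at most $1+c\,\delta(F)$ for an absolute constant $c$} --- in particular $\delta(F)=0$ forces $F$ to be a single star. (With triangles this is false --- a triangle has $\delta=0$ yet needs two vertices --- which is exactly why the reduction must start from triangle-free graphs.) Granting the lemma, the union of the per-cluster covers is a vertex cover of $G$ of size at most $\sum_i(1+c\delta_i)=k'+cD\le k+cD$, so in the NO case $(1+\gamma)k<\tau(G)\le k+cD$ forces $D>\gamma k/c$; since the optimum is $|E|-k'+D\ge|E|-k+D$, it then exceeds $|E|-k+\gamma k/c$. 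Since $\deg\le\Delta$ and $k=\Theta(|V|)$, we have $|E|-k\le|E|\le\tfrac{\Delta}{2}|V|=O(k)$, so the ratio between the NO and YES optima is $1+\Omega(\gamma k/|E|)\ge 1+\epsilon$ for a positive constant $\epsilon=\epsilon(\gamma,\Delta)$, which is Theorem~\ref{thm:main}.

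The main obstacle is the structural lemma: converting the metric statement ``$C_i$ has small $1$-means deficiency'' into the combinatorial one ``$C_i$ has a small vertex cover'', quantitatively, with triangle-freeness the only hypothesis. A natural route: pick a maximum-degree vertex $v$ of $F$; small $\delta(F)$ keeps $\sum_w d_F(w)^2$ near its maximum $|F|(|F|+1)$, which together with $\sum_w d_F(w)=2|F|$ forces the degree sequence close to that of a star, so $d_F(v)$ is close to $|F|$; then cover the at most $|F|-d_F(v)$ leftover edges recursively, using triangle-freeness to exclude the triangle-like configurations that would otherwise keep $\delta$ at zero without $F$ being star-like. The only external input is the base NP-hardness of \mkvc\ on bounded-degree triangle-free graphs, the separate contribution summarised in the abstract.
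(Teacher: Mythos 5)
Your proposal follows the same route as the paper: the identical point-per-edge construction $p_e = \ind u + \ind v$, the same formula $\gcost(C_i) = 2m_i - \tfrac1{m_i}\sum_v d_i(v)^2$ for cluster cost, the same lower bound $m_i-1$ per cluster, the same completeness (stars from a vertex cover), and the same soundness strategy of converting small per-cluster deficiency into a small per-cluster vertex cover and unioning. Your overall accounting and the final gap amplification via bounded degree and $k=\Theta(|V|)$ also match the paper. So this is essentially the paper's proof, modulo a reformulation of the key lemma.

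The one place where your write-up is not yet a proof is the structural lemma itself, and the sketch you give for it would not go through as stated. You propose: ``every triangle-free edge set $F$ has a vertex cover of size at most $1+c\,\delta(F)$,'' and sketch a proof by picking a maximum-degree vertex and recursing, on the grounds that small $\delta$ forces the degree sequence close to a star's. That last inference is false without using triangle-freeness in a sharper way: the triangle has degree sequence $(2,2,2)$, $m=3$, $\sum d^2=12=m(m+1)$, hence $\delta=0$, yet it is nowhere near a star and its maximum degree is only $2=m-1$, and naive recursion would then conclude cover size $2$, inconsistent with $1+c\cdot 0$. So ``$\delta$ small $\Rightarrow$ degree sequence star-like'' is simply not true; what is true is ``$\delta<1/2 \Rightarrow \delta=0$ and the graph is a star or a triangle,'' and triangle-freeness is the hypothesis that then rules out the triangle. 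The paper's Lemma~\ref{thm:small-cover-all} proves exactly this, not via a single high-degree vertex but via an averaging argument producing an \emph{edge} $\{u,v\}$ with $d_u+d_v-1\ge m-\delta$, followed by a majorization/Schur-convexity argument showing $\alpha=d_u\le 2$ when $\delta<1/2$, which forces star-or-triangle. Your ``cover $\le 1+c\,\delta$'' form then follows with a small absolute $c$: for $\delta=0$ the triangle-free graph is a star (cover $1$); for $\delta\ge 1/2$, take $u,v$ plus one vertex per uncovered edge to get cover $\le 2+\lfloor\delta\rfloor\le 1+2\delta$. In short, the statement of your lemma is the right (and arguably cleaner) packaging, but the proof needs the paper's edge-plus-majorization argument rather than the max-degree recursion you describe.

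Two small points worth fixing. First, in your completeness step you split stars to get \emph{exactly} $k$ nonempty clusters; this is fine but you should also verify $k\le |E|$, which does hold in the hard instances where $k=m/\Delta$. Second, in the final ratio you write the NO/YES gap as $1+\tfrac{\gamma k/c}{|E|-k}$; you need $|E|-k=\Omega(k)$ for this to be $1+\Omega(1)$, which again is ensured by the choice $k=m/\Delta$ with constant $\Delta\ge 2$. These are the same choices the paper makes explicitly.
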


The starting point for our reduction is the \textsf{Vertex-Cover}
problem on triangle-free graphs: here, given a triangle-free graph,
the goal is to choose the fewest number of vertices which are incident
on all the edges in the graph. This naturally leads us to our other
main result in this paper, that of showing hardness of approximation
of vertex cover on triangle-free graphs. Kortsarz et al~\cite{K08}
show that if the vertex cover problem is hard to approximate to a
factor of $\alpha \geq 3/2$, then it is hard to approximate vertex
cover on triangle-free graphs to the same factor of $\alpha$. While
such a hardness (in fact, a factor of $2- \epsilon$~\cite{Khot08}) is
known assuming the stronger unique games conjecture, the best known
NP-hardness results do not satisfy $\alpha \geq 3/2$. We settle this
question by showing NP-hardness results for approximating vertex cover
on triangle-free graphs, which match the best known hardness on
general graphs.

\begin{theorem}
  It is NP-hard to approximate Vertex Cover on triangle-free graphs to
  within any factor smaller than $1.36$.
\end{theorem}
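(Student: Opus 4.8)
The plan is to reduce from the known NP-hardness of Vertex Cover on general graphs, i.e.\ the Dinur--Safra hardness, which in its Independent Set formulation supplies constants $s>c$ with $\frac{1-c}{1-s}$ equal to the Dinur--Safra factor $10\sqrt{5}-21\approx 1.3606$ such that it is NP-hard to distinguish $n$-vertex graphs $G$ with $\alpha(G)\ge sn$ from graphs with $\alpha(G)\le cn$. The goal is to turn $G$ into a \emph{triangle-free} graph $G'$ with essentially the same normalized independence number. I would take $G'=G\times H$, the tensor (categorical) product with a carefully chosen fixed graph $H$ on $n_H$ vertices: $V(G')=V(G)\times V(H)$, and $(u,x)(v,y)\in E(G')$ exactly when $uv\in E(G)$ and $xy\in E(H)$. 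Any triangle of $G\times H$ projects, via the $H$-coordinate, to a triangle of $H$, so it suffices to pick $H$ triangle-free. To additionally pin down the independence number, $H$ should be $d$-regular with small normalized ``second eigenvalue'' $\rho:=\lambda(H)/d$, where $\lambda(H)=\max(|\lambda_2|,|\lambda_{\min}|)$; a Ramanujan graph of girth exceeding $3$ does the job, being triangle-free and satisfying $\rho\le 2\sqrt{d-1}/d$, which becomes an arbitrarily small constant once the constant $d$ is taken large.

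For completeness, if $S\subseteq V(G)$ is independent then $S\times V(H)$ is independent in $G'$, so $\alpha(G')\ge\alpha(G)\,n_H\ge s\,|V(G')|$ in the YES case. The crux is the soundness bound on $\alpha(G')$. Let $I$ be a maximum independent set of $G'$; for $u\in V(G)$ set $I_u:=\{x:(u,x)\in I\}$ and $\phi(u):=|I_u|/n_H$. Whenever $uv\in E(G)$, independence of $I$ forbids any $H$-edge between $I_u$ and $I_v$, so the expander mixing lemma for $H$ gives $\frac{d}{n_H}|I_u|\,|I_v|\le\lambda(H)\sqrt{|I_u|\,|I_v|}$, i.e.\ $\sqrt{\phi(u)\phi(v)}\le\rho$. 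Hence the ``heavy'' set $L:=\{u:\phi(u)>\rho\}$ contains no edge of $G$, so $|L|\le\alpha(G)\le cn$, and therefore $\alpha(G')=n_H\sum_u\phi(u)\le n_H\big(|L|+\rho(n-|L|)\big)\le(c+\rho)\,|V(G')|$.

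Combining the two cases, the reduction sends YES instances to graphs with minimum vertex cover at most $(1-s)|V(G')|$ and NO instances to graphs with minimum vertex cover at least $(1-c-\rho)|V(G')|$, a gap of $\frac{1-c-\rho}{1-s}$; picking $d$ large enough (so $\rho$ small enough) this exceeds $1.36$, while $G'$ is triangle-free by construction. Since $d$ and $n_H$ are absolute constants, $|V(G')|=O(n)$ and the reduction runs in polynomial time, which would prove the theorem.

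The step I expect to be the real obstacle is establishing cleanly that a suitable gadget $H$ exists: it must be simultaneously triangle-free (which rules out the bipartite graphs whose $\lambda_{\min}=-d$ would force $\rho=1$) and a strong spectral expander with \emph{both} $|\lambda_2|$ and $|\lambda_{\min}|$ of order $o(d)$, and it should have only constant size so that the reduction stays polynomial; this is precisely the ``spectral analysis'' of a graph product that the abstract flags. The remaining points are routine: verifying that the absence of $H$-edges between $I_u$ and $I_v$ really lets the mixing lemma apply even when $I_u,I_v$ overlap (it does, with the standard ordered-pair count), and confirming that the Dinur--Safra instance can be used in the stated gap form (and that no regularity of $G$ is needed anywhere in the argument above).
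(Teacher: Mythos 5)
Your proposal is correct, uses the same high-level reduction as the paper (tensor product $G\otimes H$ with a constant-size, triangle-free, non-bipartite Ramanujan graph $H$, starting from Dinur--Safra), but proves the key soundness lemma by a genuinely different and in fact cleaner argument. The paper bounds $\rindset(G\otimes H)$ by a \emph{probabilistic rounding} argument: from the fractional vector $p_u = \frac{1}{N}\sum_j f_{u,j}$ it samples a random subset, deletes one endpoint per surviving $G$-edge, and controls the expected number of deletions via a spectral estimate $p^T A_G p \le \frac{\rho\Delta}{d}\cdot\frac{\|f\|^2}{N}$; this yields the multiplicative bound $\rindset(G)\ge(1-\frac{\rho\Delta}{2d})\,\rindset(G\otimes H)$, which degrades with the maximum degree $\Delta$ of $G$. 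You instead apply the expander mixing lemma directly: whenever $uv\in E(G)$, independence forces $e_H(I_u,I_v)=0$, hence $\sqrt{\phi(u)\phi(v)}\le\rho$ (your $\rho$ is the paper's $\rho/d$), so the ``heavy'' set $L=\{u:\phi(u)>\rho\}$ is independent in $G$ and $\rindset(G\otimes H)\le\rindset(G)+\rho$. This is an additive bound that is entirely independent of $\Delta$, so your soundness argument would survive even without the bounded-degree guarantee that Dinur--Safra happens to provide, and it avoids the randomized rounding machinery. The one thing you flagged as a possible obstacle --- existence of a constant-size, $d$-regular, triangle-free, non-bipartite graph with $\lambda/d$ arbitrarily small --- is indeed the same issue the paper faces, and it resolves it the same way you would: the paper cites LPS/Morgenstern Ramanujan graphs with girth $\Omega(\log_d N)>3$ and $\rho=O(\sqrt d)$ (unnormalized), so $\lambda/d=O(1/\sqrt d)\to0$; these are non-bipartite for appropriate parameters, so your worry about bipartiteness forcing $\lambda_{\min}=-d$ is avoided. (Interestingly, the Noga-Alon alternative proof in the paper's appendix is also a direct $(n,d,\lambda)$-graph argument in the spirit of yours, though organized around dominating stars rather than a threshold set $L$.) Your completeness step, the translation back to vertex cover via $\vercov=|V|-\indset$, and the arithmetic $\frac{1-c-\rho}{1-s}\to\frac{1-c}{1-s}\approx 1.36$ all match what the paper does, modulo the swapped roles of the letters $c,s$.
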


\section{Main Technical Contribution}
In Section~\ref{sec:vc-kmeans}, we show a reduction from
\textsf{Vertex-Cover} on triangle-free graphs to Euclidean $k$-means
where the vertex cover instances have small cover size if and only if
the corresponding $k$-means instances have a low cost. A crucial
ingredient is to relate the cost of the clusters to the structural
properties of the original graph, which lets us transition from the
Euclidean problem to a completely combinatorial problem. Then in
Section~\ref{sec:vc-girth}, we prove that the known hardness of
approximation results for \textsf{Vertex-Cover} carry over to
triangle-free graphs. This improves over existing hardness results for
vertex cover on triangle-free graphs~\cite{K08}. Furthermore, we
believe that our proof techniques are of independent
interest. Specifically, our reduction transforms known hard instances
$G$ of vertex cover, by taking a graph product with an appropriately
chosen graph $H$. We then show that the size of the vertex cover in
the new graph (in proportion to the size of the graph) can be related
to spectral properties of $H$. In fact, by choosing $H$ to have a
bounded spectral radius, we show that the vertex covers in $G$ and the
product graph are roughly preserved, while also ensuring that the
product graph is triangle-free.  Combining this with our reduction to
$k$-means completes the proof.

\section{Related Work}
Arthur and Vassilvitskii~\cite{Arthur07} proposed $k$-means++, a
random sampling based approximation algorithm for Euclidean $k$-means
which achieves a factor of $O(\log k)$. This was improved by Kanungo
et al.~\cite{Kanungo02} who proposed a local search based algorithm
which achieves a factor of $(9+\epsilon)$. This is currently the best
known approximation algorithm for $k$-means. For fixed $k$ and $d$,
Matousek~\cite{Matousek00} gave a PTAS for $k$-means which runs in
time $O(n\epsilon^{-2k^2 d} \log^k n))$. Here $n$ is the number of
points and $m$ is the dimensionality of the space. This was improved
by Badoiu et al.~\cite{Badoiu02} who gave a PTAS for fixed $k$ and any
$d$ with run time $O(2^{({k/\epsilon})^{O(1)}}poly(d) n \log^k n)$.
Kumar et al.~\cite{KSS04} gave an improved PTAS with exponential
dependence in $k$ and only linear dependence in $n$ and $d$. Feldman
et al.~\cite{FeldmanMS07} combined this with efficient coreset
constructions to give a PTAS for fixed $k$ with improved dependence on
$k$. The work of Dasgupta~\cite{Dasgupta08} and Aloise et
al.~\cite{Aloise09} showed that Euclidean $k$-means is NP-hard even
for $k=2$. Mahajan et al.~\cite{Mahajan09} also show that the
$k$-means problem is NP-hard for points in the plane.

There are also many other clustering objectives related to $k$-means
which are commonly studied. The most relevant to our discussion are
the $k$-median and the $k$-center objectives. In the first problem,
the objective is to pick $k$ centers to minimize the sum of distances
of each point to the nearest center (note that the distances are not
squared). The problem deviates from $k$-means in two crucial aspects,
both owing to the different contexts in which the two problems are
studied: (i) the $k$-median problem is typically studied in the
setting where the centers are one of the data points (or come from a
set of possible centers specified in the input), and (ii) the problem
is also very widely studied on general metrics, without the Euclidean
restriction. The $k$-median problem has been a testbed of developing
new techniques in approximation algorithms, and has constantly seen
improvements even until very
recently~\cite{jv01,jms06,Li13}. Currently, the best known
approximation for $k$-median is a factor of $2.611 + \epsilon$ due to
Bykra et al.~\cite{Bykra14}. On the other hand, it is also known that
the $k$-median objective (on general metrics) is NP-hard to
approximate to a factor better than $(1+1/e)$~\cite{jms06}. When
restricted to Euclidean metrics, Kolliopoulos et al.~\cite{KR99} show
a PTAS for $k$-median on constant dimensional spaces. On the negative
side for $k$-median on Euclidean metrics, it is known that the
discrete problem (where centers come from a specified input) cannot
have a PTAS under standard complexity assumptions~\cite{Indyk03}. As
mentioned earlier, all these results are for the version when the
possible candidate centers is specified in the input. For the problem
where any point can be a center, Arora et al.~\cite{ARR98} show a PTAS
when the points are on a 2-dimensional plane.

In the $k$-center problem the objective is to pick $k$ center points
such that the maximum distance of any data point to the closest center
point is minimized. In general metrics, this problem admits a
$2$-factor approximation which is also optimal assuming
$P \neq NP$~\cite{hochbaum1986unified}. For Euclidean metric when the
center could be any point in the space, the upper bound is still $2$
and the best hardness of approximation is a factor
$1.82$~\cite{feder1988optimal}.

\section{Our Hardness Reduction: From Vertex Cover to Euclidean
  $k$-means} \label{sec:vc-kmeans}
In this section, we show a reduction from the Vertex-Cover problem (on
triangle-free graphs) to the $k$-means problem. Formally, the vertex
cover problem can be stated as follows: Given an undirected graph
$G = (V,E)$, choose a subset $S$ of vertices (with minimum $|S|$) such
that $S$ is incident on every edge of the graph. More specifically,
our reduction establishes the following theorem.

\begin{theorem} \label{thm:redux} There is an efficient reduction from
  instances of Vertex Cover (on triangle-free graphs) to those of
  Euclidean $k$-means that satisfies the following properties:
  \begin{enumerate}
  \item [(i)] if the \mkvc instance has value $k$, then the $k$-means
    instance has cost at most $m-k$.
  \item [(ii)] if the \mkvc instance has value at least
    $k (1 + \epsilon)$, then the optimal $k$-means cost is at least
    $m - (1 - \Omega(\epsilon))k$. Here, $\epsilon$ is some fixed
    constant $ > 0$.
  \end{enumerate}
\end{theorem}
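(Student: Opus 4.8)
The plan is to encode a triangle-free graph $G = (V, E)$ with $n = |V|$ vertices and $m = |E|$ edges as a $k$-means instance in $\R^{|V|}$, with one data point per edge: for edge $e = \{u,v\}$, place the point $x_e = \ind{u} + \ind{v}$ (the indicator vector of its endpoints), and set the target number of clusters to $k$, the vertex-cover value we wish to detect. The intuition is that a cluster should correspond to a vertex $w$ together with (some of) the edges incident to $w$; the ``natural'' center for such a cluster is $\ind{w}$, and the contribution of an edge $e = \{w, v\}$ to that cluster's cost is $\|x_e - \ind{w}\|^2 = \|\ind{v}\|^2 = 1$, whereas if an edge were perfectly covered (both endpoints chosen) it could in principle cost $0$. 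Since $G$ is triangle-free, no two edges sharing a common vertex $w$ have their other endpoints adjacent, which is exactly what makes the clustering cost for a ``star around $w$'' clean to analyze: the optimal $1$-means center of a set of points $\{\ind{w} + \ind{v} : v \in N\}$ is $\ind{w} + \frac{1}{|N|}\sum_{v \in N} \ind{v}$, and the resulting cost is $|N| - 1$ (one saves exactly $1$ over the naive bound by recentering). This is the ``crucial ingredient'' relating cluster cost to graph structure that the introduction alludes to.

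For direction (i), given a vertex cover $S$ with $|S| = k$, assign each edge $e$ to one of its covered endpoints in $S$, forming $k$ clusters (one per vertex of $S$), and use the optimal star center for each. If cluster $w$ receives $d_w$ edges, its cost is $d_w - 1$, so the total is $\sum_{w \in S}(d_w - 1) = m - k$, establishing (i) exactly. For direction (ii), the contrapositive: suppose the $k$-means instance has cost less than $m - (1 - \Omega(\epsilon))k$; I want to extract a vertex cover of size less than $k(1+\epsilon)$. The key structural step is to argue that in any clustering, each cluster's cost is at least (number of points in cluster) minus $1$, unless the cluster is ``degenerate'' in a controlled way — more precisely, I would show that a cluster whose points do not all share a common vertex, or which contains points that are ``spread out,'' must pay strictly more than ``size minus $1$,'' and by triangle-freeness the only cheap clusters are exactly stars around a single vertex. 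Summing, a clustering of cost $C$ into $k$ parts yields roughly $k$ ``center vertices'' whose union covers all but a number of edges proportional to the excess cost $C - (m-k)$; those few uncovered edges can be covered by adding one endpoint each, inflating the cover by at most $O(\epsilon k)$, which with a suitable choice of the $\Omega(\epsilon)$ constant stays below $k(1+\epsilon)$.

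The main obstacle is the robustness analysis in step (ii): unlike direction (i), an adversarial near-optimal clustering need not respect the ``one cluster per vertex'' structure — a cluster could mix edges from several different vertices, or split the star of one vertex across clusters, and I must show that any such deviation is penalized enough in the $k$-means cost to not help. The cleanest route is a charging/potential argument: for a cluster $P$ with optimal center $c$, write $\mathrm{cost}(P) = \sum_{e \in P}\|x_e\|^2 - |P|\cdot\|c\|^2 = 2|P| - |P|\|\bar x_P\|^2$ where $\bar x_P$ is the centroid, and then lower-bound $2 - \|\bar x_P\|^2$ in terms of how close $P$ is to a genuine star (using that off a star, the centroid has small coordinates spread over many vertices, so $\|\bar x_P\|^2$ is bounded away from its star value, which is $1 + 1/|P|$). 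Triangle-freeness enters precisely here: it prevents the ``two-vertex'' configurations (edges all between a fixed pair $u, v$ — impossible since that is a multigraph issue — or more relevantly, short cycles) that would otherwise give alternative cheap clusters. I expect the calculation to reduce to showing that a cluster of $t$ points that is not contained in a single star has cost at least $t - 1 + \delta$ for an absolute constant $\delta > 0$, and then a global accounting over the $k$ clusters converts the cost slack into the bound on the cover size.
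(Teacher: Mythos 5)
Your construction, the cost formula $2|P| - |P|\|\bar x_P\|^2 = 2m_{\ff} - \tfrac{1}{m_{\ff}}\sum_u d_u^2$, the calculation that a star on $t$ edges has cluster cost exactly $t-1$, and the entire completeness direction all match the paper exactly; the overall soundness plan (low cost $\Rightarrow$ most clusters are stars $\Rightarrow$ extract a near-optimal cover) is also the paper's plan. But the soundness direction has a genuine gap at precisely the step you flag as ``the main obstacle.'' You correctly identify that the heart of the argument is a structural lemma of the form ``a non-star cluster has cost at least $t-1+\delta_0$ for an absolute constant $\delta_0>0$,'' and the paper indeed proves this with $\delta_0 = 1/2$ (via a Schur-convexity / majorization argument on the degree sequence, showing that if $\tfrac{1}{m}\sum d_u^2 \geq m + 1/2$ then the graph must be a star or a $3$-cycle). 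You do not supply such a proof, only the intuition ``off a star, the centroid has small coordinates,'' which is not yet an argument.

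More importantly, even granting that lemma, your extraction step does not go through as stated. You propose to take one center vertex per cluster, yielding $k$ vertices, and claim the uncovered edges number $O(\epsilon k)$. This fails: a non-star cluster with $m_i$ edges and excess $\delta_i \geq 1/2$ can have its max-degree vertex covering only about half its edges (the lemma only guarantees an \emph{edge} $\{u,v\}$ with $d_u + d_v \geq m_i + 1 - \delta_i$, so $d_{\max}$ can be as small as roughly $m_i/2$), leaving $\Theta(m_i)$ uncovered edges that are not charged to the tiny excess $\delta_i$. The paper sidesteps this by spending \emph{two} vertices per bad cluster: picking both endpoints of that max-$d_u+d_v$ edge covers all but $\delta_i$ edges of cluster $i$, and since there are at most $2\delta k$ bad clusters this only inflates the cover to $(1+2\delta)k$, after which the remaining $\leq \sum_i \delta_i \leq \delta k$ uncovered edges are handled one vertex apiece. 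To make your one-vertex-per-cluster plan work, you would instead need a strictly stronger lemma — roughly, that when $\delta_i$ is small relative to $m_i$ there is a \emph{single} vertex with $d_u \geq m_i - O(\delta_i)$ — together with a separate accounting for clusters where $\delta_i = \Omega(m_i)$; that is a valid alternative route, but it is not what you wrote, and neither version is proved in your proposal. Finally, your description of the role of triangle-freeness as ruling out ``short cycles'' is imprecise: the majorization argument shows a cluster with $\delta_i < 1/2$ is either a star \emph{or a triangle}, and triangle-freeness is used to eliminate exactly that one exceptional case, nothing more.
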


In \ref{sec:girth}, we show that there exist triangle-free graph
instances of vertex cover on $m = \Theta(n)$ edges, and
$k = \Omega(n)$ such that it is NP-hard to distinguish if the instance
has a vertex cover of size at most $k$, or all vertex covers have size
at least $(1+\epsilon)k$, for some constant $\epsilon > 0$.

Now, let $k = m/\Delta$ where $\Delta = \Omega(1)$ from the hard
vertex cover instances. Then, from~\ref{thm:redux}, we get that if the
vertex cover has value $k$, then the $k$-means cost is at most
$m (1 - \frac{1}{\Delta})$, and if the vertex cover is at least
$k(1+\epsilon)$, then the optimal $k$-means cost is at least
$m (1 - \frac{1 - \Omega(\epsilon)}{\Delta})$. Therefore, the vertex
cover hardness says that it is also NP-hard to distinguish if the
resulting $k$-means instance has cost at most
$m ( 1 - \frac{1}{\Delta})$ or cost more than
$m (1 - \frac{1 - \Omega(\epsilon)}{\Delta})$. Since $\Delta$ is a
constant, this implies that it is NP-hard to approximate the $k$-means
problem within some factor $(1 +\Omega(\epsilon))$, thereby
establishing our main result~\ref{thm:main}. In what follows, we
prove~\ref{thm:redux}.

\subsection{Proof of \texorpdfstring{\ref{thm:redux}}
  {Theorem~\ref{thm:redux}} }
\label{pf:redux} Let $G = (V,E)$ denote the graph in the \mkvc
instance \I, with parameter $k$ denoting the number of vertices we can
select.
We associate the vertices with natural numbers $[n]$. Therefore, we
refer to vertices by natural numbers $i$, and edges by pairs of
natural numbers $(i,j)$.

\medskip \noindent {\bf Construction of k-means Instance $\I_{km}$.}
For each vertex $i \in [n]$, we have a unit vector
$\xvec_i = (0, 0, \ldots, 1, \ldots, 0)$ which has a $1$ in the
$i^{th}$ coordinate and $0$ elsewhere.  Now, for each edge
$e \equiv (i,j)$, we have a vector $\xvec_e \triangleq e_i + e_j$.
Our data points on which we solve the $k$-means problem is precisely
$\{ \xvec_e \, : \, e \in E\}$.  This completes the definition of
$\I_{km}$.

\begin{note}
  As stated, the dimensionality of the points we have constructed is
  $n$, and we get a hardness factor of $(1+\epsilon)$. However, by
  using the dimensionality reduction ideas of Johnson and
  Lindenstrauss (see, e.g. ~\cite{dasgupta2003elementary}), without
  loss of generality, we can assume that the points lie in
  $O(\log n/\epsilon^2)$ dimensions and our hardness results still
  hold true. This is because, after the transformation, all pairwise
  distances (and in particular, the $k$-means objective function) are
  preserved upto a factor of $(1+\epsilon/10)$ of the original values,
  and so our hardness factor is also (almost) preserved, i.e., we
  would get hardness of approximation of $(1+\Omega(\epsilon))$.

  However, for simplicity, we stick with the $n$ dimensional vectors
  as it makes the presentation much cleaner.
\end{note}

\subsection{Completeness}
Suppose $\I$ is such that there exists a vertex cover
$S^* = \{v_1, v_2, \ldots, v_k\}$ of $k$ vertices which can cover all
the edges.  We will now show that we can recover a good clustering of
low $k$-means cost.  To this end, let $E_{v_\ell}$ denote the set of
edges which are covered by $v_{\ell}$ for $1 \leq \ell \leq k$. If an
edge is covered by two vertices, we assume that only one of them
covers it. As a result, note that the $E_{v_\ell}$'s are pairwise
disjoint (and their union is $E$), and each $E_{v_{\ell}}$ is of the
form
$\{(v_{\ell}, w_{\ell, 1}), (v_{\ell}, w_{\ell, 2}), \ldots,
(v_{\ell}, w_{\ell, p_\ell})\}$.

Now, to get our clustering, we do the following: for each $v \in S^*$,
form a cluster out of the data points
$\ff_{v} := \{\xvec_{e} \, : \ e \in E_{v}\}$. We now analyze the
average connection cost of this solution.  To this end, we begin with
some easy observations about the k-means clustering. Indeed, since any
cluster is of a set of data points (corresponding to a subset of edges
in the graph $G$), we shall abuse notation and associate any cluster
$\ff$ also with the corresponding subgraph on $V$, i.e.,
$\ff \subseteq E$. Moreover, we use $d_\ff(i)$ to denote the degree of
node $i$ in $\ff$ and $m_{\ff}$ to denote the number of edges in
$\ff$, $m_{\ff} = |\ff|$. Finally, we refer by $d_G(i)$ the degree of
vertex $i$ in $G$.

\begin{claim}
  Given any clustering $\{ \ff \}$, the following hold.
  \begin{enumerate}
  \item [(i)] $\sum_{\ff} d_\ff(i) = d_G(i)$.
  \item [(ii)] $\sum_i \sum_{\ff} d_\ff(i) = 2m = 2|E|$.
  \end{enumerate}
\end{claim}
\begin{proof}
  The proof is immediate, because every edge $e \in E$ belongs to
  exactly one cluster in $\{\ff\}$.
\end{proof}

Our next claim relates the connection cost of any cluster $\ff$ to the
structure of the associated subgraph, which forms the crucial part of
the analysis.
\begin{claim}
  \label{cl:cost}
  The total connection cost of any cluster $\ff$ is:
  \[\sum_i d_\ff(i) (1 - \frac{1}{m_{\ff}} d_\ff(i)).\]
\end{claim}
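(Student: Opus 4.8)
The plan is to directly compute the $k$-means connection cost of a single cluster $\ff$ by first locating its optimal center — which for the squared-Euclidean objective is simply the centroid (mean) of the data points in the cluster — and then expanding the sum of squared distances to that centroid. So the first step is to recall the standard fact that for any finite set of points $\{\xvec_e : e \in \ff\}$, the point $c$ minimizing $\sum_{e \in \ff} \|\xvec_e - c\|^2$ is the centroid $c_\ff = \frac{1}{m_\ff}\sum_{e \in \ff} \xvec_e$, and that the resulting cost equals $\sum_{e \in \ff} \|\xvec_e\|^2 - m_\ff \|c_\ff\|^2$ (the classic bias-variance / parallel-axis identity). This reduces everything to computing two quantities in terms of the subgraph $\ff$.

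Next I would evaluate $c_\ff$ coordinate by coordinate. Since $\xvec_e = \ind{i} + \ind{j}$ for $e = (i,j)$, the $i$-th coordinate of $\sum_{e \in \ff} \xvec_e$ counts exactly the number of edges of $\ff$ incident to $i$, i.e. $d_\ff(i)$. Hence $c_\ff = \frac{1}{m_\ff}\big(d_\ff(1), d_\ff(2), \ldots, d_\ff(n)\big)$, and $\|c_\ff\|^2 = \frac{1}{m_\ff^2}\sum_i d_\ff(i)^2$. For the other term, $\|\xvec_e\|^2 = \|\ind{i} + \ind{j}\|^2 = 2$ for every edge, so $\sum_{e \in \ff}\|\xvec_e\|^2 = 2 m_\ff = \sum_i d_\ff(i)$ by part (ii) of the previous claim restricted to $\ff$ (equivalently, $\sum_i d_\ff(i) = 2m_\ff$ since each edge contributes $2$ to the degree sum). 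Substituting both into the parallel-axis formula gives
\[
\sum_{e \in \ff} \|\xvec_e - c_\ff\|^2 \;=\; \sum_i d_\ff(i) \;-\; m_\ff \cdot \frac{1}{m_\ff^2}\sum_i d_\ff(i)^2 \;=\; \sum_i d_\ff(i)\Big(1 - \tfrac{1}{m_\ff} d_\ff(i)\Big),
\]
which is exactly the claimed expression.

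There is no real obstacle here — the only point requiring a little care is that the $k$-means objective allows the center to be an arbitrary point of $\R^d$, so I must justify that restricting to the centroid is without loss of generality (this is precisely the optimality of the mean for squared distances), and that in the completeness argument we are free to use this particular clustering with its centroids as the candidate solution. Everything else is a two-line expansion using the combinatorial identities already established.
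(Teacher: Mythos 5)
Your proof is correct and follows essentially the same route as the paper's: compute the centroid $\mu_\ff$ coordinate-by-coordinate as $\mu_\ff(i) = d_\ff(i)/m_\ff$, apply the parallel-axis identity $\sum_{e\in\ff}\|\xvec_e - \mu_\ff\|^2 = \sum_{e\in\ff}\|\xvec_e\|^2 - m_\ff\|\mu_\ff\|^2$, and substitute $\|\xvec_e\|^2 = 2$ and $\sum_i d_\ff(i) = 2m_\ff$. The only cosmetic difference is that you explicitly flag the optimality of the centroid for the squared-Euclidean objective, which the paper leaves implicit.
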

\begin{proof}
  Firstly, note that $\sum_i d_\ff(i) = 2 m_{\ff}$. Now consider the
  center $\mu_{\ff}$ of cluster $\ff$.  By definition, we have that at
  coordinate $i \in V$:
  \[
  \mu_{\ff}(i) = \frac{1}{m} \sum_{S \in \ff: i\in S} 1 =
  \frac{d_\ff(i)}{m}.
  \]
  So $\|\mu_{\ff}\|^2 = \frac{1}{m^2} \sum_i d_\ff(i)^2$.  Hence the
  total cost of this clustering is:
  \begin{align*}
    c_{\ff} =&
               \sum_{e \in \ff} (\|\xvec_e - \mu_{\ff}\|^2)
               = \sum_{e \in \ff} (\|\xvec_e\|^2 - \|\mu_{\ff}\|^2) \\
    = & 2 m_\ff  - \frac1m \sum_{i \in V} d_\ff(i)^2
        = \sum_{i} d_\ff(i) - \frac1m d_\ff(i)^2.
  \end{align*}
  The first equality here uses the fact that
  $m_\ff \mu_{\ff} = \sum_{e \in \ff} \xvec_e$; and the second
  equality uses the fact that $\| \xvec_e \|^2 = 2$ for each data
  point.
\end{proof}

\begin{claim} \label{cl:yescost} There exists a clustering of our
  $k$-means instance $\I_{km}$ with cost at most $m - k$, where $m$ is
  the number of edges in the graph $G = (V,E)$ associated with the
  vertex cover instance $\I$, and $k$ is the size of the optimal
  vertex cover.
\end{claim}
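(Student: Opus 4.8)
The plan is to use exactly the clustering $\{\ff_v : v \in S^*\}$ constructed above and to evaluate its cost directly with Claim~\ref{cl:cost}. The key structural observation is that each cluster $\ff_v$, viewed as a subgraph of $G$, is a \emph{star}: it consists precisely of the edges $(v, w_{v,1}), \ldots, (v, w_{v,p_v})$ incident on $v$, and since $G$ is a simple graph the leaves $w_{v,1}, \ldots, w_{v,p_v}$ are distinct. Hence in $\ff_v$ the center $v$ has degree $d_{\ff_v}(v) = p_v = m_{\ff_v}$, each of the $p_v$ leaves has degree $1$, and every other vertex has degree $0$.

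I would then substitute this into the formula of Claim~\ref{cl:cost}. The term contributed by the center is $d_{\ff_v}(v)\left(1 - \frac{1}{m_{\ff_v}} d_{\ff_v}(v)\right) = m_{\ff_v}(1 - 1) = 0$, and each leaf contributes $1 \cdot \left(1 - \frac{1}{m_{\ff_v}}\right)$, so the total connection cost of $\ff_v$ is exactly $m_{\ff_v} - 1$. Summing over all $k$ clusters and using that the sets $E_v$ partition $E$ (so that $\sum_{v \in S^*} m_{\ff_v} = m$), the overall cost is $\sum_{v \in S^*} (m_{\ff_v} - 1) = m - k$, which is the bound claimed.

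The only subtlety I anticipate is that some $v \in S^*$ might cover no edge, so that $\ff_v = \es$ and we are left with only $k' < k$ nonempty clusters and total cost $m - k' > m - k$. This is easy to repair: first note $k \le m$ (picking one endpoint of each edge gives a vertex cover of size at most $m$), and observe from the computation above that splitting a star of size at least $2$ into two nonempty sub-stars of sizes $p_1, p_2$ turns its cost $p - 1$ into $(p_1 - 1) + (p_2 - 1) = p - 2$, a strict decrease. So I can repeatedly split nonempty clusters to obtain exactly $k$ clusters without ever increasing the cost above $m - k$. I do not expect a genuine obstacle here: the substantive content is already in Claim~\ref{cl:cost}, and what remains is the observation that a star's cost telescopes to $m_{\ff_v} - 1$ together with this minor bookkeeping.
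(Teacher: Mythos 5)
Your proof takes essentially the same route as the paper: use the star clustering $\{\ff_v : v \in S^*\}$, invoke Claim~\ref{cl:cost} to compute the cost of each star as $m_{\ff_v}-1$, and sum over the $k$ clusters. The only difference is your additional handling of potential empty clusters; this subtlety is actually moot here, since $S^*$ is a \emph{minimum} vertex cover, so for every $v \in S^*$ there is some edge covered only by $v$ (else $S^* \setminus \{v\}$ would be a smaller cover), hence every $E_v$ is nonempty regardless of how shared edges are disambiguated. That said, your star-splitting repair is correct and makes the argument robust even for non-minimum covers, so there is no gap either way.
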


\begin{proof}
  Consider a cluster $\ff_v$, which consists of data points associated
  with edges covered by a single vertex $v$. Then, by \ref{cl:cost},
  the connection cost of this cluster is precisely $m_{\ff_v} - 1$,
  since the sub-graph associated with a cluster is simply a star
  rooted at $v$. Here, $m_{\ff_v}$ is the number of edges which $v$
  covers in the vertex cover (if an edge is covered by different
  vertices in the cover, it is included in only one vertex). Then,
  summing over all clusters, we get the claim.
\end{proof}

\subsection{Soundness}
In this section, we show that if there is a clustering of low
$k$-means cost, then there is a very good vertex cover for the
corresponding graph. We begin with some useful notation.

\def\gcost{\mathsf{Cost}}
\begin{notation}\label{def:graph-var}
  Given a set $E' \subseteq \binom{V}{2}$ of $m_{E'} = | E' |$ edges
  with corresponding node degrees $(d_1,\ldots, d_n)$, we define
  $\gcost(E')$ as the following:
  \[
  \gcost(E') \triangleq \sum_{u\in V} d_u \Big(1 -
  \frac{d_u}{m_{E'}} \Big).
  \]
\end{notation}
Note that, by~\ref{cl:cost}, the connection cost of a clustering
$\Gamma = \{ \ff_1, \ff_2, \ldots, \ff_k \}$ of the $n$ points is
equal to $\sum_{i} \gcost(\ff_i)$. Recall that we abuse notation
slightly and view each cluster $\ff_i$ of the data points also as a
subset of $E$. Moreover, because $\Gamma$ clusters all points, the
subgraphs $\ff_1, \ff_2, \ldots, \ff_k$ form a partition of $E$. Using
this analogy, we study the properties of each subgraph and show that
if the $k$-means cost of $\Gamma$ is small, then most of these
subgraphs in fact are stars. This will in turn help us recover a small
vertex cover for $G$. We begin with a simple property of $\gcost(E')$.

\begin{proposition}
  \label{thm:graph-var-props}
  For any set of $m_{E'}$ edges $E'$,
  $
  m_{E'}-1\le \gcost(E') \le 2 m_{E'}-1.
  $
\end{proposition}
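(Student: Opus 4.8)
The plan is to first put $\gcost(E')$ into a cleaner form and then to bound a single quantity — the sum of squared degrees — from both sides. Writing $m = m_{E'}$ and using the handshake identity $\sum_{u\in V} d_u = 2m$, one immediately gets $\gcost(E') = \sum_u d_u - \frac{1}{m}\sum_u d_u^2 = 2m - \frac{1}{m}\sum_u d_u^2$. Thus the two claimed inequalities are exactly equivalent to the sandwich $2m \le \sum_u d_u^2 \le m(m+1)$, and it suffices to establish these two bounds on $\sum_u d_u^2$.

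The lower bound $\sum_u d_u^2 \ge 2m$ (which yields $\gcost(E') \le 2m-2 \le 2m-1$) is the easy direction: each $d_u$ is a nonnegative integer, so $d_u^2 \ge d_u$, and summing gives $\sum_u d_u^2 \ge \sum_u d_u = 2m$. (Equality holds for a matching, so in fact $2m-2$ is the right constant, but $2m-1$ is all that is needed later.) For the upper bound $\sum_u d_u^2 \le m(m+1)$ (which yields $\gcost(E')\ge m-1$), the key step is the double-counting identity $\sum_{u\in V} d_u^2 = \sum_{(u,v)\in E'}(d_u+d_v)$: each vertex $u$ appears in exactly $d_u$ of the edge terms and contributes $d_u$ to each. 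It then remains to show $d_u+d_v \le m+1$ for every edge $(u,v)\in E'$. This follows because the $d_u-1$ edges of $E'$ incident to $u$ but different from $(u,v)$, the $d_v-1$ edges incident to $v$ but different from $(u,v)$, and $(u,v)$ itself are $d_u+d_v-1$ \emph{pairwise distinct} edges of $E'$ — since $E'\subseteq\binom{V}{2}$ is simple, no edge other than $(u,v)$ is incident to both $u$ and $v$ — so $d_u+d_v-1 \le |E'| = m$. Summing this over all $m$ edges gives $\sum_u d_u^2 \le m(m+1)$, hence $\gcost(E')\ge 2m-(m+1)=m-1$; equality here is attained by a star, which is consistent with the star computation in Claim~\ref{cl:yescost}.

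The only step with any content is the per-edge inequality $d_u+d_v\le m+1$, and the one thing I would be careful about there is the use of simplicity of the graph to guarantee that the $d_u+d_v-1$ listed edges are distinct; the rest of the argument is a single identity plus an integer inequality. The case $m=1$ is already handled by the general argument (it gives $0 \le \gcost(E') \le 0$), so no separate treatment is required.
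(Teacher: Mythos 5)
Your proof is correct. It begins exactly as the paper's does, rewriting $\gcost(E') = 2m - \frac{1}{m}\sum_u d_u^2$ and reducing the claim to the sandwich $2m \le \sum_u d_u^2 \le m(m+1)$, and the lower bound $\sum_u d_u^2 \ge \sum_u d_u = 2m$ is the same integer inequality $d_u^2 \ge d_u$ in both. Where you diverge is in justifying $\sum_u d_u^2 \le m(m+1)$: the paper simply \emph{asserts} that $\sum_u d_u^2$ is maximized by the star degree sequence $(m,1,\ldots,1)$, leaving that extremal claim unargued, whereas you prove the bound directly via the double-counting identity $\sum_u d_u^2 = \sum_{\{u,v\}\in E'}(d_u+d_v)$ together with the per-edge observation $d_u + d_v - 1 \le m$ (using simplicity to ensure the listed edges are distinct). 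Your route is more self-contained and arguably tighter as a proof; it also fits nicely with the rest of the paper, since that same double-counting identity is precisely what Lemma~\ref{thm:small-cover-all} later relies on. The paper's phrasing, by contrast, flags the extremal structure (the star) that motivates the subsequent soundness analysis, but strictly speaking leaves a small gap that your argument fills.
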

\begin{proof}
  We have
  $\gcost(E') = \sum_{u\in V} d_u \Big(1 - \frac{d_u}{m_{E'}} \Big) =
  2m_{E'} - \frac{\sum_{u \in V} d^2_u}{m_{E'}}$.
  The proof follows from noting that
  $\frac{\sum_{u \in V} d^2_u}{m_{E'}} \geq
  \frac{\sum_{u \in V} d_u}{m_{E'}} = 2$
  and $\frac{\sum_{u \in V} d^2_u}{m_{E'}} \leq m_{E'}+1$. The last
  inequality is due to the fact that $\sum_{u \in V} d^2_u$ is
  maximized by the degree sequence $(m_{E'},1,1,\ldots ,1)$.
\end{proof}
\begin{theorem}
  \label{thm:kmeans-to-vc}
  If the $k$-means instance $\I_{km}$ has a clustering
  $\Gamma = \{\ff_1,\ldots, \ff_k\}$
  with $\sum_{\ff\in \Gamma} \gcost(\ff) \le m - (1-\delta) k$, then
  there exists a $(1+O(\delta)) k$-vertex cover of $G$ in the instance
  $\I$.
\end{theorem}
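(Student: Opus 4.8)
The plan is to recast the statement as a single combinatorial inequality about subgraphs and then sum over the clusters. For a subgraph $\ff$ with $m_\ff\ge 1$ write $\mathrm{ex}(\ff)\triangleq\gcost(\ff)-(m_\ff-1)\ge 0$ for its \emph{excess} over the star bound of Proposition~\ref{thm:graph-var-props}; thus $\mathrm{ex}(\ff)=0$ exactly when $\ff$ is a star, and by Claim~\ref{cl:cost} the $k$-means cost of $\Gamma$ equals $\sum_{\ff\in\Gamma}\gcost(\ff)$. The clusters partition the $m$ edges of $G$ (we may discard the empty ones and assume there are $k'\le k$ of them), so $\sum_\ff m_\ff=m$ and the hypothesis $\sum_\ff\gcost(\ff)\le m-(1-\delta)k$ rearranges to $\sum_\ff\mathrm{ex}(\ff)\le \delta k$. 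Also, since $\ff_1,\dots,\ff_{k'}$ partition $E$, the union of minimum vertex covers of the individual $\ff_i$ is a vertex cover of $G$, so $\vercov(G)\le\sum_\ff\vercov(\ff)$.

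Everything thus reduces to the following claim, which is where the real work lies: there is a universal constant $\beta$ such that every triangle-free graph $\ff$ with $m_\ff\ge 1$ satisfies $\vercov(\ff)\le 1+\beta\,\mathrm{ex}(\ff)$. (Triangle-freeness is essential: a triangle has $\mathrm{ex}=0$ but $\vercov=2$.) Granting the claim, each cluster is triangle-free (being a subgraph of $G$) and
\[
\vercov(G)\;\le\;\sum_\ff\vercov(\ff)\;\le\;k'+\beta\sum_\ff\mathrm{ex}(\ff)\;\le\;k+\beta\delta k\;=\;(1+O(\delta))\,k ,
\]
the ``$k'$'' charging one vertex to each nonempty cluster.

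To prove the claim I would induct on $m_\ff$. Two reductions come first: (a) $\vercov$ is additive over connected components, while — using $\sum_u d_u^2/m_{E'}\ge 2$ from Proposition~\ref{thm:graph-var-props} — $\mathrm{ex}$ is \emph{superadditive up to an additive $1$ per extra component}, so the disconnected case of the claim follows from the inductive hypothesis applied componentwise (for any $\beta\ge 1$), leaving the connected case; and (b) for any vertex $v$, $\vercov(\ff)\le 1+\vercov(\ff-v)$. Pick a vertex $v$ of maximum degree $D$ and set $\ff'=\ff-v$, $m'=m_\ff-D\ge 1$. If $\ff'$ is a star (in particular if $m'=1$), then $\vercov(\ff)\le 2$, and a direct computation — $\ff$ is then a double star or a $K_{2,a}$ with pendant leaves — shows $\mathrm{ex}(\ff)\ge 2/3$, which suffices once $\beta\ge 3/2$. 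Otherwise the inductive hypothesis gives $\vercov(\ff)\le 2+\beta\,\mathrm{ex}(\ff')$, so it remains to prove $\mathrm{ex}(\ff)-\mathrm{ex}(\ff')\ge 1/\beta$. This is where triangle-freeness is used substantively: since $N(v)$ is independent, the degree bookkeeping collapses to the identity
\[
m_\ff\big(\mathrm{ex}(\ff)-\mathrm{ex}(\ff')\big)\;=\;D(m_\ff-D-1)\;+\;\tfrac{D}{m'}\,S'\;-\;2X ,
\]
where $S'=\sum_u d_{\ff'}(u)^2$ and $X$ is the number of edges from $N(v)$ to $V\setminus(N(v)\cup\{v\})$; moreover $S'\ge 2m'$ and $X\le\min\{D(D-1),\,m'\}$ (max degree $D$, and at most $m'$ non-$v$ edges). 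A short case split on whether $D\le\sqrt{m_\ff}$ (with the finitely many small $m_\ff$ checked by hand) then bounds the right-hand side below by $m_\ff$, i.e.\ $\mathrm{ex}(\ff)-\mathrm{ex}(\ff')\ge 1>1/\beta$. Tracking the tight instances across all cases, the extremal example turns out to be the path with three edges ($\vercov=2$, $\mathrm{ex}=2/3$), which pins $\beta=3/2$. Extracting this constant from the identity above is the main obstacle; the remaining steps are routine bookkeeping.
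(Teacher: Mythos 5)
Your high-level architecture matches the paper's: you define the same per-cluster ``excess'' $\mathrm{ex}(\ff)=\gcost(\ff)-(m_\ff-1)$ (the paper calls it $\delta_i$), observe $\sum_\ff\mathrm{ex}(\ff)\le\delta k$, and assemble a vertex cover cluster-by-cluster. Your reformulation of the target as the single inequality $\vercov(\ff)\le 1+\beta\,\mathrm{ex}(\ff)$ for triangle-free $\ff$ is clean and is the right abstraction --- in fact it follows with $\beta=3$ from the paper's Lemma~\ref{thm:small-cover-all}: if $\mathrm{ex}(\ff)<\tfrac12$ the lemma forces $\ff$ to be a star (since triangle-free), giving $\vercov=1$; otherwise the heavy-edge part of that lemma gives $\vercov(\ff)\le 2+\mathrm{ex}(\ff)\le 1+3\,\mathrm{ex}(\ff)$ because $\mathrm{ex}\ge\tfrac12$. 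Summing then gives $(1+3\delta)k$, exactly the paper's bound. So the overall route is sound and genuinely cleaner to state.

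The gap is in your proof of the key claim, which you attempt by induction on $m_\ff$ (removing a max-degree vertex) rather than by the paper's Schur-convexity/majorization argument. Your identity $m_\ff(\mathrm{ex}(\ff)-\mathrm{ex}(\ff'))=D(m_\ff-D-1)+\tfrac{D}{m'}S'-2X$ is correct (triangle-freeness is what makes $\sum_{w\in N(v)}d_{\ff'}(w)=X$), and so are the bounds $S'\ge 2m'$, $X\le\min\{D(D-1),m'\}$. But the asserted conclusion ``bounds the right-hand side below by $m_\ff$, i.e.\ $\mathrm{ex}(\ff)-\mathrm{ex}(\ff')\ge 1$'' is simply false in general, including for a ``small'' $m_\ff$ you would need to handle: take $\ff=P_5$ and remove the middle vertex (degree $D=2=$ max degree), so $\ff'$ is a $2$-edge matching (not a star, so you are in the inductive branch). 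Then $\mathrm{ex}(\ff)=3/2$, $\mathrm{ex}(\ff')=1$, and the difference is $1/2$, strictly below both $1$ and your required $1/\beta=2/3$. Your proposal could still be repaired --- one must specify which max-degree vertex to delete (here deleting an off-center degree-$2$ vertex makes $\ff'$ a star, landing in your base case), and one must carry out the promised $D\lessgtr\sqrt{m_\ff}$ split (which works once $m_\ff\ge 9$) plus the hand-check of small $m_\ff$ --- but as written the inductive step is not established and the pivotal inequality is wrong as stated. The paper avoids this entirely by proving the structural fact ($\mathrm{ex}<1/2$ and triangle-free $\Rightarrow$ star) once via majorization and then only paying the extra $+1$ per cluster on the $O(\delta k)$ clusters with $\mathrm{ex}\ge 1/2$.
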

\def\goodc{\mathrm{Good}}
\def\badc{\mathrm{Bad}}
\def\uglyc{\mathrm{Ugly}}
\def\cover{\mathrm{Cover}_0}
Note that this, along with~\ref{cl:yescost} would complete the proof
of~\ref{thm:redux}.
\begin{proof}
  For each $i \in [k]$, let $m_i \triangleq |\ff_i|$ and
  $\nu_i \triangleq \sum_u d_u(\ff_i)^2$.  Note that
  $\gcost(\ff_i) = 2 m_i - \frac{\nu_i}{m_i}$.  By
  \ref{thm:graph-var-props}, each $i \in [k]$ satisfies
  $m_i - 1\le \gcost(\ff_i) \le 2 m_i - 1$.  Hence if we define
  $\delta_i$ as $\delta_i \triangleq \gcost(\ff_i) - (m_i - 1)$, then
  $0\le \delta_i \le m_i$. Moreover
  $ \frac{\nu_i}{m_i} = m_i + 1 - \delta_i.  $
  Thus:
  \[
  m - (1-\delta) k \ge \sum_i \gcost(\ff_i) = \sum_i (\delta_i + m_i -
  1) = \sum_i \delta_i + m - k \implies \delta k \ge \sum_i \delta_i.
  \]
  This means, except $\le 2 \delta k$ clusters, the remaining clusters
  all have $\delta_i \le \frac12$.  Moreover,
  \ref{thm:small-cover-all} implies all these $(1-2\delta) k$ clusters
  are either stars or triangles and have $\delta_i = 0$. Since the
  graph is triangle free, they are all stars, and hence the
  corresponding center vertices cover all the edges in the respective
  clusters. It now remains to cover the edges in the remaining
  $2 \delta k$ clusters which have larger $\delta_i$ values. Indeed,
  even for these clusters, we can appeal to \ref{thm:small-cover-all},
  and choose \emph{two} vertices per cluster to cover all but
  $\delta_i$ edges in each cluster. So the size of our candidate
  vertex cover is at most $k(1 +2\delta)$, and we have covered all but
  $\sum_{i} \delta_i$ edges. But now, we notice that
  $\sum_{i} \delta_i \leq \delta k$, and so we can simply include one
  vertex per uncovered edge and would obtain a vertex cover of size at
  most $k (1 + 3 \delta)$, thus completing the proof.
\end{proof}

\begin{mylemma}\label{thm:small-cover-all}
  Given a graph $G_\ff=(V,\ff)$ with $m = |\ff|$ edges and degrees
  $(d_1,\ldots, d_n)$; let $\delta$ be such that
  \[
  \frac{1}{m} \sum_u {d_u}^2 = m + 1 - \delta.
  \]
  There always exists an edge $\{u,v\}\in \ff$ with
  $d_u + d_v - 1 \ge m+1+\delta$.  Furthermore, if $\delta < \frac12$,
  then $\delta=0$ and $G_\ff$ is either a star graph or a triangle.
\end{mylemma}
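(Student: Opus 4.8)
The plan is to get everything out of one double-counting identity. Since each vertex $w$ is counted $d_w$ times among the edge-endpoints (once per incident edge), $\sum_{\{u,v\}\in\ff}(d_u+d_v)=\sum_w d_w^2$. Dividing by $m$ shows that the average of $d_u+d_v$ over the $m$ edges of $\ff$ is exactly $\tfrac1m\sum_w d_w^2=m+1-\delta$, so some edge $\{u,v\}\in\ff$ has $d_u+d_v\ge m+1-\delta$, i.e.\ $d_u+d_v-1\ge m-\delta$: its two endpoints are incident on all but at most $\delta$ of the edges of $\ff$. (This is exactly the quantitative bound used later when the soundness proof picks two vertices per cluster.)

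For the ``furthermore'' part, suppose $\delta<\tfrac12$. Then the edge found above has $d_u+d_v\ge m+1-\delta>m+\tfrac12$, and integrality forces $d_u+d_v\ge m+1$. But $\{u,v\}$ is itself an edge of $\ff$, so the number of edges meeting $u$ or $v$ equals $d_u+d_v-1$, which cannot exceed $m$; hence $d_u+d_v=m+1$ and \emph{every} edge of $\ff$ meets $u$ or $v$. Consequently every $w\notin\{u,v\}$ has degree at most $2$ in $G_\ff$, with degree exactly $2$ only when $w$ is adjacent to both $u$ and $v$ (i.e.\ $uvw$ is a triangle).

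To finish I would rule out $0<\delta<\tfrac12$ by a direct computation on this rigid structure. Let $a,b$ be the numbers of private neighbors of $u$ and of $v$, and $t$ the number of common neighbors, so $d_u=1+a+t$, $d_v=1+b+t$ and $m=1+a+b+2t$. Setting $P=d_u-1$, $Q=d_v-1$ (so $P+Q=m-1$ and $0\le t\le\min(P,Q)$) and substituting the degree sequence into $m\delta=m(m+1)-\sum_w d_w^2$, the algebra collapses to the identity $m\delta=2(PQ-t)$. Hence $\delta\ge0$, with $\delta=0$ iff $\min(P,Q)=0$ (the star $K_{1,m}$) or $P=Q=t=1$, $m=3$ (the triangle). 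Assuming without loss of generality $P\le Q$, hence $t\le P$, a short case split on whether $P$ is $0$, $1$, or $\ge 2$ shows that as soon as $PQ-t\ge1$ one actually has $PQ-t\ge m/4$, i.e.\ $\delta\ge\tfrac12$; so under $\delta<\tfrac12$ we must have $PQ-t=0$, forcing $\delta=0$ and $\ff$ a star or a triangle.

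The only delicate point is that last case split: the instance $P=2$, $Q=1$, $t=1$ (so $m=4$, $\delta=\tfrac12$) meets $PQ-t=1=m/4$ with equality, so the implication $PQ-t\ge1\Rightarrow PQ-t\ge m/4$ is tight and must be checked carefully in each case. The averaging identity and the rigidity step above are routine.
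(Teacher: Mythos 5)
Your proposal is correct, and it proves the same (corrected) inequality that the paper's own proof establishes: note that the bound printed in the lemma statement, $d_u+d_v-1\ge m+1+\delta$, is a typo --- it is impossible, since $d_u+d_v-1$ counts the edges meeting $u$ or $v$ (so is at most $m$) while $\delta\ge 0$. Like the paper, you derive $d_u+d_v\ge m+1-\delta$ from the averaging identity $\frac1m\sum_u d_u^2=\expcts{u\sim v}{d_u+d_v}$, which is exactly the form used downstream in the soundness argument.

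The first half and the rigidity step (integrality forces $d_u+d_v=m+1$, so every edge meets $u$ or $v$ and all other degrees are at most $2$) coincide with the paper. Where you genuinely diverge is the finish. The paper observes that the degree sequence is majorized by $(\alpha,\beta,2,\dots,2,1,\dots,1)$, uses Schur-convexity of $\sum d_u^2$ to get $(2-\delta)\alpha\le 4+\delta$, concludes $\alpha\le 2$ and $m=3$, and then eliminates the $4$-vertex path by computing its $\delta=2/3$. You instead parametrize the rigid structure by $(P,Q,t)=(d_u-1,d_v-1,\#\text{common neighbors})$ and derive the exact identity $m\delta=2(PQ-t)$, from which nonnegativity of $\delta$, the characterization of $\delta=0$ (star iff $\min(P,Q)=0$, triangle iff $P=Q=t=1$), and the implication $PQ-t\ge1\Rightarrow\delta\ge1/2$ all follow by a short case split. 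I checked the case split: $P=1,t=0$ gives $Q\ge(Q+2)/4$; $P=1,t=1$ gives $Q-1\ge(Q+2)/4$ for $Q\ge2$ (tight at the ``paw'', $m=4$, $\delta=1/2$); $P\ge2$ gives $PQ-t\ge 2(Q-1)\ge(1+P+Q)/4$. Your exact identity is arguably cleaner than the majorization bound --- it characterizes the equality cases directly and exhibits the true extremal example (the triangle with a pendant edge, $\delta=1/2$), showing the constant $\tfrac12$ in the lemma is sharp, whereas the paper's route only reaches the path $P_4$ with $\delta=2/3$ as its boundary case.
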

\begin{proof}
  Since $\sum_u {d_u}^2 = \sum_{u\sim v} (d_u + d_v)$, we can think of
  $\frac{1}{m} \sum_u {d_u}^2$ as the the expectation of $d_u + d_v$
  over a random edge chosen uniformly, $\{u,v\} \in E$:
  \[
  \frac{1}{m} \sum_u {d_u}^2 = \expcts{u\sim v}{d_u + d_v}.
  \]
  From this, we can immediately conclude the existence
  of an edge
  $\{u,v\}$ with $d_u + d_v \ge m + 1 - \delta$.
  Now to complete the second part of the Lemma statement, suppose $d_u \ge d_v$.
  The number of edges incident to $\{u,v\}$ is:
  \[
  d_u + d_v - 1 \ge m - \delta
  \overset{\delta<1}{\implies}
  d_u + d_v - 1 = m.
  \]
  So all edges are incident to $u$ or $v$, and
  $d_w \le 2$ if $w \notin \{u,v\}$.
  If $d_v \le 1$, then we are done.
  In the other case, we have $d_v \ge 2 \ge d_w$
  for all $w \notin \{u,v\}$.
  Let $\alpha \triangleq d_u$ and $\beta \triangleq d_v$.
  The degree sequence $(d_1,\ldots,d_n)$ is strongly
  majorized by the following sequence, $d'$:
  \[
  d' \triangleq
  \Big(\alpha, \beta,
  \underbrace{2, \ldots, 2,}_{\mbox{$\beta-1$ many}}
  \overbrace{1, \ldots, 1}^{\mbox{$\alpha-\beta$ many}}
  \Big).
  \]
  Since $\sum_u d_u^2$ is Schur-convex, its value
  increases under majorization:
  \begin{align*}
    (\alpha + \beta-1) (\alpha+\beta-\delta)= &
                                                m ( m + 1 - \delta) \le
                                                \sum_u d_u^2
                                                \le  \sum_u {d_u'}^2 \\
    = & \alpha^2 + \beta^2
        + 4 (\beta - 1) + (\alpha - \beta). \\
    \implies 0 \le & (\alpha+\beta-1) \delta+2 \alpha+4
                     \beta-4-2
                     \alpha \beta \\
    = &  (\alpha+\beta-1) \delta+2 \alpha ( 1 - \beta)
        +4 (\beta-1).
  \end{align*}
  So we obtain
  $
  2 \alpha ( \beta - 1 ) \le
  (\alpha+\beta-1) \delta+4 (\beta - 1)
  $. Since $\beta \ge 2$, 
  we divide both sides by $\beta-1$:
  \[
  2 \alpha \le \frac{\alpha}{\beta-1} \delta
  + 4 + \delta
  \le \delta \alpha  + 4 + \delta.
  \]
  In particular,
  $(2-\delta) \alpha \le 4 + \delta
  \implies
  \alpha \le \frac{4 + \delta}{2 - \delta} < 3$
  as $\delta < 1/2$. Hence $\alpha \le 2$.	
  Consequently, $d_u = d_v = 2$ and
  $m = d_u + d_v - 1 = 3$.
  There are two possible cases:
  The graph is either a $3$-cycle or $4$-path.
  In the latter case, the corresponding $\delta$ 
  is:
  \[
  \delta = m + 1 - 
  \frac1m \sum_u {d_u}^2 
  = 4 - 
  \frac{1}{3} (2^2 + 2^2 + 1 + 1 )
  = 4 - \frac{10}{3} = \frac{2}{3} > \frac12;
  \]  which is a contradiction and the graph is a triangle.
\end{proof}

Putting the pieces together, we get the proof of \ref{thm:redux}.

\remark{\textbf{Unique Games Hardness:}} Khot and Regev~\cite{Khot08}
show that approximating Vertex-Cover to factor $(2-\epsilon)$ is hard
assuming the Unique Games conjecture. Furthermore, Kortsarz et
al.~\cite{K08} show that any approximation algorithm with ratio
$\alpha \geq 1.5$ for Vertex-Cover on $3$-cycle-free graphs implies an
$\alpha$ approximation algorithm for Vertex-Cover (on general graphs).
This result combined with the reduction in this section immediately
implies APX hardness for $k$-means under the unique games
conjecture. In the next section we generalize the result of Kortsarz
et al.~\cite{K08} by giving an approximation preserving reduction from
Vertex-Cover on general graphs to Vertex-Cover on triangle-free
graphs. This would enable us to get APX hardness for the $k$-means
problem.

\section{Hardness of Vertex Cover on Triangle-Free Graphs} \label{sec:vc-girth}
\label{sec:girth}
In this section, we show that the \mkvc problem is as hard on
triangle-free graphs %
as it is on general graphs. To this end, for any graph $G=(V,E)$, we
define $\indset(G)$ as the size of maximum independent set in $G$.
For convenience, we define $\rindset(G)$
as the ratio of $\indset(G)$ to the number of nodes in $G$:
\[
\rindset(G) \triangleq \frac{\indset(G)}{|V|}.
\]
Similarly, let $\vercov(G)$ be the
size of minimum vertex cover in $G$ and
$\rvercov(G)$ be the ratio $\frac{\vercov(G)}{|V|}$.
The following is well known, which says independent sets
and vertex covers are duals of each other.
\begin{proposition}
  \label{thm:vc-is-dual}
  Given $G=(V,E)$, $I \subseteq V$ is an independent set
  if and only if $C = V\sm I$ is a vertex cover. In particular,
  $\indset(G) + \vercov(G) = |V|$.
\end{proposition}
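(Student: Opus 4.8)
The statement is the classical Gallai-type identity: in any graph $G=(V,E)$, a set $I$ is independent if and only if its complement $C = V \setminus I$ is a vertex cover, and consequently $\indset(G) + \vercov(G) = |V|$. The plan is to prove the biconditional directly from the definitions, and then derive the additive identity as an immediate corollary by applying the equivalence to a maximum independent set and a minimum vertex cover.

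First I would unwind the definitions. Recall that $I \subseteq V$ is independent precisely when no edge of $E$ has both endpoints in $I$, and $C \subseteq V$ is a vertex cover precisely when every edge of $E$ has at least one endpoint in $C$. Now fix an arbitrary edge $e = \{u,v\} \in E$. The key observation is a pure statement about complements: not both of $u,v$ lie in $I$ if and only if at least one of $u,v$ lies in $V \setminus I = C$. This is just contrapositive bookkeeping --- $u,v \in I$ is the negation of ``$u \in C$ or $v \in C$''. Since this holds for every edge, quantifying over all $e \in E$ gives that $I$ is independent $\iff$ $C = V\setminus I$ is a vertex cover.

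For the cardinality identity, let $I^*$ be a maximum independent set, so $|I^*| = \indset(G)$. By the equivalence just proved, $V \setminus I^*$ is a vertex cover, so $\vercov(G) \le |V \setminus I^*| = |V| - \indset(G)$. Conversely, let $C^*$ be a minimum vertex cover, so $|C^*| = \vercov(G)$; then $V \setminus C^*$ is an independent set, giving $\indset(G) \ge |V| - \vercov(G)$. Combining the two inequalities yields $\indset(G) + \vercov(G) = |V|$, as claimed.

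Honestly, there is no substantive obstacle here: the entire content is the one-line complementation observation about a single edge, and the only thing to be careful about is not conflating ``maximum independent set'' with ``maximal independent set'' when passing to cardinalities --- the argument uses that complementation is an inclusion-reversing bijection on $2^V$, so it carries extremal sets to extremal sets. Everything else is routine.
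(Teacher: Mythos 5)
Your proof is correct and is the standard argument. The paper actually states this proposition without proof (``The following is well known''), so there is nothing to compare against; your write-up is a clean, complete filling-in of the omitted argument, correctly reducing the biconditional to the per-edge complementation observation and then obtaining the cardinality identity by applying the equivalence to a maximum independent set and a minimum vertex cover.
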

We will prove the following theorem.
\begin{theorem}\label{thm:is-tri-free-apx-preserve}
  For any constant $\eps>0$, 
  there is a $(1+\eps)$-approximation-preserving 
  reduction for independent set from any graph $G=(V,E)$ 
  with maximum degree $\Delta$ 
  to triangle-free graphs with 
  $\mathrm{poly}(\Delta,\eps^{-1}) |V|$ nodes and degree 
  $\mathrm{poly}(\Delta,\eps^{-1})$ in deterministic
  polynomial time.	
\end{theorem}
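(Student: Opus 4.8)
The plan is to apply a \emph{tensor graph product}. Fix a $d$-regular auxiliary graph $H=(V_H,E_H)$ on $N$ vertices, to be specified, and set $G'\triangleq G\times H$: the graph on $V\times V_H$ with $(u,a)\sim(v,b)$ iff $u\sim_G v$ \emph{and} $a\sim_H b$. Two facts are immediate. (1)~$G'$ is triangle-free whenever $H$ is: a triangle in $G'$ has three pairwise $H$-adjacent second coordinates, hence three distinct vertices forming a $3$-cycle in $H$. (2)~$|V(G')|=N|V|$ and the maximum degree of $G'$ is at most $\Delta d$. So it suffices to find a triangle-free $H$ with $d,N=\mathrm{poly}(\Delta,\eps^{-1})$ such that $\rindset(G')$ and $\rindset(G)$ differ by at most $\eps$ and, moreover, an independent set of $G'$ can be decoded in polynomial time into an independent set of $G$ that is (normalizing by $|V|$) at most $\eps$ smaller. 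Constructing the parameters of $H$, writing $H$ down explicitly, and forming $G\times H$ are all clearly deterministic polynomial time, so everything is in these two estimates.

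\emph{Completeness} is immediate: if $I$ is independent in $G$ then $I\times V_H$ is independent in $G'$, so $\indset(G')\ge\indset(G)\cdot N$ and hence $\rindset(G')\ge\rindset(G)$.

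For \emph{soundness} (and the decoding), let $S\subseteq V\times V_H$ be independent in $G'$; for $u\in V$ put $S_u\triangleq\{a:(u,a)\in S\}$ and $x_u\triangleq|S_u|/N\in[0,1]$, so $|S|=N\sum_u x_u$. Independence of $S$ says precisely that for every edge $\{u,v\}\in E$ there is no $H$-edge between $S_u$ and $S_v$, i.e. $\langle\mathbf 1_{S_u},A_H\mathbf 1_{S_v}\rangle=0$. Now choose $H$ to be a non-bipartite, triangle-free, $d$-regular \emph{two-sided spectral expander} with $\lambda(H)\triangleq\max\{|\mu|:\mu\text{ an eigenvalue of }A_H,\ \mu\ne d\}\le\eps d$; explicit such graphs with $d=\Theta(\eps^{-2})$ and $N=\mathrm{poly}(\eps^{-1})$ exist — for instance the Ramanujan graphs of Lubotzky, Phillips and Sarnak chosen, with a large enough parameter, to have girth larger than $3$ (these have $\lambda(H)\le 2\sqrt{d-1}$). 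The expander mixing lemma gives $\bigl|\langle\mathbf 1_{S_u},A_H\mathbf 1_{S_v}\rangle-\tfrac dN|S_u|\,|S_v|\bigr|\le\lambda(H)\sqrt{|S_u|\,|S_v|}$, and since the inner product vanishes this forces $\tfrac dN|S_u|\,|S_v|\le\eps d\sqrt{|S_u|\,|S_v|}$, i.e. $x_ux_v\le\eps^2$ for every edge $\{u,v\}\in E$. Hence $T\triangleq\{u\in V:x_u>\eps\}$ contains no edge of $G$: it is an independent set, and $|T|\ge\sum_u x_u-\eps|V|=|S|/N-\eps|V|$. Applying this to a maximum independent set of $G'$ yields $\rindset(G')\le\rindset(G)+\eps$, while the map $S\mapsto T$ is the required decoding, losing only an additive $\eps|V|$. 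Finally, because $\Delta$ is bounded we have $\indset(G)\ge|V|/(\Delta+1)$, so this additive loss is a multiplicative $1+O(\Delta\eps)$ factor; re-running the whole construction with $\eps$ replaced by $\eps/\mathrm{poly}(\Delta)$ — which keeps every size and degree $\mathrm{poly}(\Delta,\eps^{-1})$ — turns it into a genuine $(1+\eps)$-approximation-preserving reduction.

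The only genuinely delicate point is the choice of $H$: it has to be \emph{simultaneously} triangle-free, explicit, and a \emph{two-sided} expander, with all nontrivial eigenvalues — including the most negative one — of magnitude $\ll d$. Triangle-freeness alone is easy (bipartite graphs), but a bipartite $d$-regular graph has smallest eigenvalue $-d$, which makes the mixing lemma vacuous; so one really must invoke a non-bipartite high-girth Ramanujan (or near-Ramanujan) construction. Everything else — triangle-freeness of the product, completeness, the threshold argument converting ``$x_ux_v\le\eps^2$ on every edge'' into an independent set, and the size/degree/running-time bookkeeping — is routine.
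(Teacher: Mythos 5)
Your proposal is correct and follows the same overall strategy as the paper: take a tensor product with an explicit non-bipartite, triangle-free, $d$-regular graph $H$ whose spectral radius is a small fraction of $d$ (an LPS Ramanujan graph with a large enough number of vertices). The triangle-freeness of the product, the completeness direction, the polynomial bookkeeping, and the $\eps\to\eps/\mathrm{poly}(\Delta)$ finish are all exactly as in the paper. Where you diverge is the soundness/decoding step, which is the real content of the theorem. The paper considers the indicator vector $f$ of an independent set of $G\otimes H$, forms the marginals $p_u=\tfrac1N\sum_j f_{u,j}$, samples a random set $I_0$ with these marginals, and bounds $\expcts{}{|E_G(I_0,I_0)|}=\tfrac12 p^T A_G p$ by the global Kronecker-product inequality $p^TA_Gp=\tfrac1N f^T(A_G\otimes\widetilde{J_N})f\le\tfrac{\rho\Delta}{Nd}\|f\|^2$, then deletes one endpoint per surviving edge. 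You instead apply the expander mixing lemma locally, edge by edge of $G$, to conclude $x_ux_v\le(\lambda/d)^2$ on every edge, and then threshold $T=\{u:x_u>\eps\}$. Both arguments rest on the same spectral fact (small $\rho(H)/d$), and both lose an additive $\eps|V|$ that is converted to a multiplicative loss via $\indset(G)\ge|V|/(\Delta+1)$. Your per-edge mixing-lemma plus thresholding argument is somewhat more elementary and fully deterministic (no need for a randomized rounding followed by expectation), whereas the paper's probabilistic rounding gives the slightly sharper quantitative bound $\rindset(G)\ge(1-\tfrac{\rho\Delta}{2d})\rindset(G\otimes H)$ without needing to invoke the degree bound a second time. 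Both correctly flag that $H$ must be non-bipartite, since bipartite graphs have $-d$ as an eigenvalue and the argument collapses.
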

Combining~\ref{thm:is-tri-free-apx-preserve}
with the best known unconditional hardness result 
for \mkvc, due to Dinur~and~Safra~\cite{ds05},
we obtain the following corollary.
\begin{mycorollary}
  \label{thm:vc-tri-free-hardness}
  Given any unweighted 
  triangle-free 
  graph $G$ with bounded degrees,
  it is NP-hard to approximate \minvc\ 
  within any factor smaller than $1.36$.
\end{mycorollary}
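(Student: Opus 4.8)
To prove Corollary~\ref{thm:vc-tri-free-hardness}, the plan is to start from the Dinur--Safra NP-hardness of \minvc, rephrase it as a gap problem for independent set via Proposition~\ref{thm:vc-is-dual}, feed that into the reduction of Theorem~\ref{thm:is-tri-free-apx-preserve} to obtain triangle-free bounded-degree graphs, and then dualize back to vertex cover.

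First I would recall the Dinur--Safra result~\cite{ds05} in its gap form: there are absolute constants $0 < s_s < s_c < 1$ and a polynomial-time reduction from SAT to \emph{bounded-degree} graphs $G=(V,E)$ (maximum degree $\Delta = O(1)$) such that YES-instances have $\rindset(G)\ge s_c$, NO-instances have $\rindset(G)\le s_s$, and the parameters satisfy $\tfrac{1-s_s}{1-s_c}\ge 10\sqrt{5}-21 > 1.36$; equivalently, by $\indset(G)+\vercov(G)=|V|$, it is NP-hard to distinguish $\rvercov(G)\le 1-s_c$ from $\rvercov(G)\ge 1-s_s$. Next, fixing a small constant $\eps'>0$ to be chosen at the end, I would apply Theorem~\ref{thm:is-tri-free-apx-preserve} to $G$ with parameter $\eps'$; since $\Delta$ and $\eps'$ are constants, this produces in deterministic polynomial time a triangle-free graph $G'$ on $\mathrm{poly}(\Delta,\eps'^{-1})|V|=O(|V|)$ vertices with maximum degree $\mathrm{poly}(\Delta,\eps'^{-1})=O(1)$, and with $\rindset(G')$ within a $(1+\eps')$ factor of $\rindset(G)$. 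Hence $\rindset(G')\ge s_c/(1+\eps')$ in the YES case and $\rindset(G')\le (1+\eps')s_s$ in the NO case.

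Finally I would dualize once more: $\rvercov(G')=1-\rindset(G')$, so it is NP-hard to distinguish
\[
\rvercov(G')\le 1-\frac{s_c}{1+\eps'} \qquad\text{from}\qquad \rvercov(G')\ge 1-(1+\eps')s_s ,
\]
which rules out approximating \minvc\ on the triangle-free, bounded-degree graph $G'$ to within
\[
\frac{1-(1+\eps')s_s}{\,1-s_c/(1+\eps')\,},
\]
a ratio that tends to $\tfrac{1-s_s}{1-s_c}\ge 10\sqrt5-21 > 1.36$ as $\eps'\to 0$; choosing $\eps'$ a sufficiently small constant (depending on the target factor $<1.36$) then finishes the argument, since the resulting $G'$ has $O(|V|)$ vertices and $O(1)$ degree.

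The step I expect to need the most care is this last dualization: independent set and vertex cover are not approximation-equivalent in general, and a $(1+\eps')$ multiplicative perturbation of $\rindset$ turns into only a $(1+O(\eps'))$ multiplicative perturbation of $\rvercov=1-\rindset$ precisely because $1-s_c$ (the completeness vertex-cover fraction in Dinur--Safra) is a fixed positive constant bounded away from $0$. I would also want to double-check that the Dinur--Safra hard instances can be taken to have bounded degree, since that is exactly what makes the output of Theorem~\ref{thm:is-tri-free-apx-preserve} have constant degree, as asserted in the corollary.
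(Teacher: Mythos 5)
Your proposal is correct and follows essentially the same route as the paper: apply the triangle-free reduction (Theorem~\ref{thm:sparse-vc-to-tri-free} / Theorem~\ref{thm:is-tri-free-apx-preserve}) to a bounded-degree Dinur--Safra instance, observe that $\rindset$ is preserved up to $(1+\eps_0)$, then dualize to vertex cover and choose $\eps_0$ small enough. One small simplification you can use for free: the reduction is one-sided, $\rindset(G)\le\rindset(\widehat G)\le(1+\eps_0)\rindset(G)$, so in the YES case you actually get $\rindset(\widehat G)\ge s_c$ with no $(1+\eps')$ loss at all, which matches what the paper does.
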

Given two simple
graphs $G=(V_1, E_1)$ and $H=(V_2, H_2)$,
we define the Kronecker product of $G$ and $H$,
$G\otimes H$, as the
graph with nodes $V(G\otimes H) = V_1 \times V_2$
and edges:
\[
E(G\otimes H) = \Big\{ \{(u,i),(v,j)\} \big| \{u,v\}\in E(G),\
\{i,j\}\in E(H) \Big\}.
\]
Observe that, if $A_G$ and $A_H$ denote the adjacency
matrix of $G$ and $H$,
then $A_{G\otimes H} = A_G \otimes A_H$.

Given any symmetric matrix $M$, we will use 
$\sigma_i(M)$ to denote the $i^{th}$ largest
eigenvalue of $M$. 
For any graph $G$ on $n$-nodes,
we define the spectral radius of $G$, $\rho(G)$,
as the following: 
\[\rho(G)\triangleq
\max_{p\perp \ind{}}
\frac{ \big| p^T A_G p \big| }{\|p\|^2}
= \max( \sigma_2(A_G), |\sigma_{n}(A_G)| ).				
\]
Here $\ind{}$ is the all $1$'s vector of length $n$.
\begin{proposition}\label{thm:large-girth}
  If $H$ is triangle-free, then so does 
  $G\otimes H$.
\end{proposition}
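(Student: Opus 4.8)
The plan is a short proof by contradiction. Suppose, towards a contradiction, that $G\otimes H$ contains a triangle, on vertices $(u_1,i_1),(u_2,i_2),(u_3,i_3)$. By the definition of the Kronecker product, each edge $\{(u_a,i_a),(u_b,i_b)\}$ appearing in this triangle forces \emph{both} $\{u_a,u_b\}\in E(G)$ and $\{i_a,i_b\}\in E(H)$; in particular $\{i_1,i_2\},\{i_2,i_3\},\{i_1,i_3\}$ are all edges of $H$. Since $H$ is simple (hence loopless), an edge $\{i_a,i_b\}\in E(H)$ forces $i_a\ne i_b$, so $i_1,i_2,i_3$ are pairwise distinct and therefore span a genuine $3$-cycle in $H$ — contradicting that $H$ is triangle-free. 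Before invoking this, I would record the (equally trivial) observation that $G\otimes H$ is itself a simple graph, so that "triangle" and "$3$-cycle" coincide and the argument applies verbatim.

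The only point worth emphasizing is loopless-ness of $H$: it is exactly what rules out a "degenerate triangle" of $G\otimes H$ whose $H$-coordinates collapse onto a single vertex or a single edge of $H$. Beyond that, there is no real obstacle; the statement is essentially immediate from the definition of $E(G\otimes H)$.

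Alternatively, one can give a one-line spectral proof, which matches the flavor of the rest of this section. The number of closed walks of length $3$ in any simple graph with adjacency matrix $A$ equals $\tr(A^3)$, and this vanishes if and only if the graph is triangle-free. Using $A_{G\otimes H}=A_G\otimes A_H$ together with multiplicativity of the trace under Kronecker products, $\tr(A_{G\otimes H}^3)=\tr\big((A_G\otimes A_H)^3\big)=\tr(A_G^3)\,\tr(A_H^3)$, which is $0$ since $\tr(A_H^3)=0$ by triangle-freeness of $H$; hence $G\otimes H$ is triangle-free. I would present the combinatorial argument as the main proof and relegate this spectral identity to a remark.
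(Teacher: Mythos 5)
Your combinatorial argument is correct and is essentially the paper's own proof: both reduce a putative triangle in $G\otimes H$ to three edges among the $H$-coordinates and then use looplessness of $H$ to force those coordinates to be pairwise distinct, yielding a forbidden triangle in $H$. The spectral remark is a nice (and correct) alternative, using $\tr\big((A_G\otimes A_H)^3\big)=\tr(A_G^3)\,\tr(A_H^3)=0$, but the paper does not take that route.
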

\begin{proof}
  Suppose $G\otimes H$ has a $3$-cycle of the form
  $((a,i), (b,j), (c,k), (a,i))$.
  Then $(i,j,k,i)$ is a closed walk in $H$. 
  $H$ is triangle-free, therefore $i=j$ wlog;
  a contradiction as $H$ has no loops.
\end{proof}
The following Lemma says that as long as $H$ has good spectral
properties, the relative size of maximum independent sets in $G$
will be preserved by $G\otimes H$.
\begin{mylemma}
  \label{thm:kron-is}
  Suppose $H$ is a $d$-regular graph
  with spectral radius $\le \rho$.
  For any graph $G$ with maximum degree
  $\Delta$,
  \[
  \rindset(G\otimes H) \ge
  \rindset(G) \ge
  \Big( 1 - \frac{\rho \Delta}{2 d} \Big)
  \rindset(G\otimes H).
  \]
\end{mylemma}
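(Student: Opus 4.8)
Here is the approach I would take.

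\medskip
\noindent\textbf{Plan.} I would prove the two inequalities separately. The left inequality is immediate: if $I$ is a maximum independent set of $G$, then $I\times V(H)$ is independent in $G\otimes H$, since two vertices $(u,i),(v,j)$ with $u,v\in I$ are never adjacent (their adjacency would require $\{u,v\}\in E(G)$). Hence $\indset(G\otimes H)\ge\indset(G)\,|V(H)|$, and dividing by $|V(G)|\,|V(H)|$ gives $\rindset(G\otimes H)\ge\rindset(G)$. The substance is the right inequality, and the plan is to extract a fractional certificate on $V(G)$ from a large independent set of $G\otimes H$ and then round it.

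\medskip
\noindent\textbf{From $G\otimes H$ to weights on $G$.} Fix a maximum independent set $I^{*}\subseteq V(G)\times V(H)$, put $n_H=|V(H)|$, and for each $u\in V(G)$ let $I^{*}_u\triangleq\{i\in V(H):(u,i)\in I^{*}\}$ be its fiber and $x_u\triangleq|I^{*}_u|/n_H\in[0,1]$, so that $\sum_{u}x_u=|I^{*}|/n_H=\indset(G\otimes H)/n_H$. The key structural fact is \emph{cross-independence}: for every edge $\{u,v\}\in E(G)$ the definition of the Kronecker product forces that $H$ has no edge between $I^{*}_u$ and $I^{*}_v$, i.e. $\mathbf 1_{I^{*}_u}^{\top}A_H\,\mathbf 1_{I^{*}_v}=0$, where $\mathbf 1_S$ denotes the $0/1$ indicator vector of $S\subseteq V(H)$. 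Decompose $\mathbf 1_{I^{*}_u}=x_u\mathbf 1+p_u$, with $\mathbf 1$ the all-ones vector and $p_u\perp\mathbf 1$; then $\|p_u\|^{2}=|I^{*}_u|-x_u^{2}n_H=x_u(1-x_u)n_H\le x_un_H$. Since $H$ is $d$-regular, $A_H\mathbf 1=d\mathbf 1$, so the two cross terms vanish and cross-independence becomes the identity $d\,n_H\,x_ux_v=-\,p_u^{\top}A_Hp_v$. As $p_u,p_v\in\mathbf 1^{\perp}$, which is an invariant subspace of $A_H$ on which, by the definition of $\rho$, the Rayleigh quotient is bounded by $\rho$ in absolute value, we get $|p_u^{\top}A_Hp_v|\le\rho\|p_u\|\|p_v\|$ and hence $x_ux_v\le\frac{\rho}{d\,n_H}\|p_u\|\|p_v\|$. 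Summing over $E(G)$, using $\|p_u\|\|p_v\|\le\tfrac12(\|p_u\|^{2}+\|p_v\|^{2})$, $d_G(u)\le\Delta$, and $\|p_u\|^2\le x_un_H$, gives
\[
\sum_{\{u,v\}\in E(G)}x_ux_v\;\le\;\frac{\rho}{2d\,n_H}\sum_{u}d_G(u)\,\|p_u\|^{2}\;\le\;\frac{\rho\Delta}{2d}\sum_{u}x_u .
\]

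\medskip
\noindent\textbf{Rounding.} Finally I would apply the standard probabilistic-deletion (Tur\'an-type) argument to the weights $(x_u)$ on $V(G)$: put each $u$ into a random set $I_0$ independently with probability $x_u$. Then $\mathbb E|I_0|=\sum_u x_u$ and $\mathbb E\bigl[\#\{\text{edges of }G\text{ with both ends in }I_0\}\bigr]=\sum_{\{u,v\}\in E(G)}x_ux_v\le\frac{\rho\Delta}{2d}\sum_u x_u$, so some outcome has $|I_0|$ minus its number of internal edges at least $\bigl(1-\frac{\rho\Delta}{2d}\bigr)\sum_u x_u$; deleting one endpoint of each internal edge leaves an independent set of $G$ of at least this size. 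Thus $\indset(G)\ge\bigl(1-\frac{\rho\Delta}{2d}\bigr)\indset(G\otimes H)/n_H$, and dividing by $|V(G)|$ yields $\rindset(G)\ge\bigl(1-\frac{\rho\Delta}{2d}\bigr)\rindset(G\otimes H)$. (If one wants the reduction of Theorem~\ref{thm:is-tri-free-apx-preserve} to be constructive rather than merely comparing optima, this last step is derandomized by the method of conditional expectations.) The main obstacle is the middle step: reading off the identity $d\,n_H\,x_ux_v=-p_u^{\top}A_Hp_v$ from cross-independence and bounding $\sum_{E(G)}x_ux_v$ by $\tfrac{\rho\Delta}{2d}\sum_u x_u$ via the spectral radius; the first paragraph is a one-line observation and the rounding is textbook.
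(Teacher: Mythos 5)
Your proof is correct. Every step checks out: the fiber decomposition $\mathbf 1_{I^*_u}=x_u\mathbf 1+p_u$ with $p_u\perp\mathbf 1$ is valid, the norm computation $\|p_u\|^2=x_u(1-x_u)n_H\le x_un_H$ is right, $d$-regularity kills the cross terms so cross-independence gives exactly $d\,n_Hx_ux_v=-p_u^\top A_Hp_v$, and the Rayleigh-quotient bound on the invariant subspace $\mathbf 1^\perp$ (which equals the operator-norm bound there, so the bilinear estimate $|p_u^\top A_Hp_v|\le\rho\|p_u\|\|p_v\|$ is legitimate) plus AM--GM and the degree bound deliver the same inequality as the paper's.

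The route, however, is organized differently from the paper's. The paper stays inside the tensor-product space: it writes the quantity of interest as $p^\top A_G\,p=\tfrac1N f^\top(A_G\otimes\widetilde{J_N})f$, uses independence ($f^\top(A_G\otimes B)f=0$) to replace $\widetilde{J_N}$ by $\widetilde{J_N}-B$, and then invokes a single global spectral-radius bound on the Kronecker product $A_G\otimes(\widetilde{J_N}-B)$, using the fact that eigenvalues of a Kronecker product are products of eigenvalues. You instead never spectrally analyze the product graph at all: you apply the spectral bound of $A_H$ on $\mathbf 1^\perp$ one edge of $G$ at a time, and accumulate via Cauchy--Schwarz/AM--GM and the degree bound. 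The two derivations prove the same inequality $p^\top A_Gp\le\frac{\rho\Delta}{d}\cdot\frac{\|f\|^2}{N}$; the paper's is more compact and reuses the Kronecker spectrum fact, while yours is more elementary and makes the per-edge role of cross-independence explicit, which some readers may find more transparent. The probabilistic rounding step is identical in both.
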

\begin{proof}
  Suppose $V(G) = [n]$ and $V(H) = [N]$.
  Let $A \triangleq A_G$ be the adjacency matrix of $G$
  and $B$ be the {{normalized}}
  adjacency matrix of $H$,
  \[
  B \triangleq \frac{1}{d} A_H.
  \]
  For the lower bound, consider an independent set
  $I$ in $G$.
  It is easy to check that $I \times [N]$ is an
  independent set in $G\otimes H$, thus
  $\indset(G\otimes H) \ge N \cdot \indset(G)
  \implies
  \rindset(G\otimes H) \ge \rindset(G)$.

  For the upper bound,
  consider the indicator vector $f \in \{0,1\}^{[n] \times [N]}$
  of an independent set in $G\otimes H$.
  Since the corresponding set contains no edges from
  $G\otimes H$,
  \[
  f^T (A \otimes B) f = 0.
  \]
  Define $p \in [0,1]^V$ as the following vector:
  \[
  p_u \triangleq \frac{1}{N} \sum_{j\in N} f_{u,j}.
  \]
  For each $u\in [n]$, pick $u$ with probability $p_u$.
  Let $I_0 \subseteq [n]$ be the set of picked nodes.
  Next, start with $I \gets I_0$.
  As long as there is an edge of $G$ contained in ${I}$,
  arbitrarily
  remove one of its endpoints  from $I$.
  At the end of this process, the remaining
  set $I$ is an independent set for $G$,
  and its size is at least the size of $I_0$ minus
  the number of edges contained in $I_0$.
  Hence
  $
  {}{|I|} \ge {}{|I_0| - |E_G(I_0, I_0)|}.
  $
  Observe that
  \[\expcts{}{|I_0|}
  = \sum_u p_u = \frac{1}{N} \|f\|^2. \quad
  \text{(since $f$ is a $\{0,1\}$ vector, its $l_1$ norm is the same
    as the $l_2$ norm.)}\]
  The probability of any pair $i\neq j$ being contained in $I_0$ is
  given by
  \begin{align*}
    \probs{}{\{i,j\}\subseteq I_0} = &\ \ p_u p_v.
                                       \intertext{Therefore, the expected number of edges
                                       contained in $I_0$ is $\frac12 p^T A p$:}
                                       \expcts{}{|E_G(I_0,I_0)|} = & \sum_{u<v} A_{uv} p_u p_v
                                                                     = \frac12 p^T A p. \\
    =& \frac{1}{2 N} f^T
       (A\otimes \widetilde{J_N}) f
                                                                   & \mbox{(\ref{thm:kron-is:0})}\\
    \le & \frac{1}{2 N} \Big| f^T
          A\otimes (\widetilde{J_N}-B) f
          \Big|
                                                                   & \mbox{(\ref{thm:kron-is:1})}\\
    \le & \frac{\rho \Delta}{2 N d} \|f\|^2.
                                                                   & \mbox{(\ref{thm:kron-is:2})}
  \end{align*}
  Putting it all together:
  \begin{align*}
    \expcts{}{|I|} \ge \expcts{}{|I_0|} -
    \expcts{}{|E_G(I_0, I_0)|}
    = & \frac1N \|f\|^2 - \frac12 p^T A p
        \ge \frac{\|f\|^2}{N} \Big(1 - \frac{\rho \Delta}{2 d}
        \Big).
  \end{align*}
  Therefore
  \[
  \indset(G) \ge \frac{1 - \frac{\rho \Delta}{2 d} }{N}
  \indset(G\otimes H)
  \implies
  \rindset(G) \ge
  \Big(1 - \frac{\rho \Delta}{2 d}\Big) \rindset(G\otimes H).
  \tag*{\qedhere}
  \]
\end{proof}
In the remaining part,
we prove the supporting claims.
\begin{claim}\label{thm:kron-is:0}
  $p^T A p
  = \frac{1}{N}
  {f}^T (A \otimes \widetilde{J_N}) f$ where
  $\widetilde{J_N}$ is the $N$-by-$N$ matrix of all $1/N$'s.
\end{claim}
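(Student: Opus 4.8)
The plan is a direct computation; I would expand both sides of the claimed identity in the $[n]\times[N]$ coordinates of the product and match them using the definition of $p$. The single fact to keep in mind is that $\widetilde{J_N}$ has \emph{every} entry equal to $1/N$, equivalently $\widetilde{J_N} = \tfrac1N\mathbf{1}_N\mathbf{1}_N^T$, where $\mathbf{1}_N$ denotes the all-ones vector of length $N$.

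First I would expand the Kronecker product entrywise: for all $u,v\in[n]$ and $j,k\in[N]$ we have $(A\otimes\widetilde{J_N})_{(u,j),(v,k)} = A_{uv}\,(\widetilde{J_N})_{jk} = \tfrac1N A_{uv}$, and hence
\[
  f^T(A\otimes\widetilde{J_N})f
  = \sum_{u,v\in[n]}\ \sum_{j,k\in[N]} \tfrac1N A_{uv}\, f_{u,j}\, f_{v,k}
  = \tfrac1N \sum_{u,v\in[n]} A_{uv}\Big(\sum_{j\in[N]} f_{u,j}\Big)\Big(\sum_{k\in[N]} f_{v,k}\Big).
\]
Then I would substitute the definition $p_u = \tfrac1N\sum_{j\in[N]} f_{u,j}$, i.e.\ $\sum_{j\in[N]} f_{u,j} = N p_u$, which turns the right-hand side into $\tfrac1N\sum_{u,v} A_{uv}(Np_u)(Np_v) = N\sum_{u,v}A_{uv}p_up_v = N\, p^TAp$. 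Dividing through by $N$ yields exactly $p^TAp = \tfrac1N f^T(A\otimes\widetilde{J_N})f$.

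Equivalently, and without indices: since $\widetilde{J_N} = \tfrac1N\mathbf{1}_N\mathbf{1}_N^T$, one factors $A\otimes\widetilde{J_N} = \tfrac1N\,(I_n\otimes\mathbf{1}_N)\,A\,(I_n\otimes\mathbf{1}_N)^T$, and observes that $(I_n\otimes\mathbf{1}_N)^T f$ is precisely the vector whose $u$-th entry is $\sum_{j} f_{u,j} = N p_u$, i.e.\ it equals $Np$; substituting gives $f^T(A\otimes\widetilde{J_N})f = \tfrac1N(Np)^T A (Np) = N\,p^TAp$, the same conclusion. There is no genuine obstacle in this claim — the only thing requiring care is the bookkeeping of the product coordinates $(u,j)$ and remembering that $p$ is a vector on $V(G)=[n]$, obtained by averaging $f$ over the $H$-coordinate.
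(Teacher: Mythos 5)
Your proof is correct and is essentially the same direct computation as the paper's: both expand the quadratic form over the $[n]\times[N]$ coordinates and use the fact that $\sum_j f_{u,j} = Np_u$. The paper organizes the sum by first writing $p_up_v = \tfrac1N f^T(\mathbf{e}_{}^{u,v}\otimes\widetilde{J_N})f$ for each pair $(u,v)$ and then appealing to bilinearity of the Kronecker product, whereas you expand $f^T(A\otimes\widetilde{J_N})f$ entrywise in one shot and collapse the inner sums; this is the same calculation run in the opposite direction. Your index-free reformulation via $\widetilde{J_N}=\tfrac1N\mathbf{1}_N\mathbf{1}_N^T$ and the factorization $A\otimes\widetilde{J_N} = \tfrac1N(I_n\otimes\mathbf{1}_N)A(I_n\otimes\mathbf{1}_N)^T$ is a clean equivalent, not present in the paper, but it does not change the substance of the argument.
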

\begin{proof}
  Let $\ind{}^{u,v} \in \R^{V\times V}$
  be the matrix whose entry at $u^{th}$ row
  and $v^{th}$ column is $1$, and all others $0$.
  Notice $A = \sum_{u,v} A_{u,v} \ind{}^{u,v}$.
  Let $J_N$ be the $N$-by-$N$ matrix of all $1$'s.
  For any pair $(u,v)\in V^2$, 
  \begin{align*}
    p_u p_v = & \frac{1}{N^2} \sum_{i,j} f_{u,i} f_{v,j}
                = \frac{1}{N^2} f^T \Big(\ind{}^{u,v} \otimes J_N
                \Big) f
                =
                \frac{1}{N} f^T \Big(\ind{}^{u,v} \otimes
                \widetilde{J_N}
                \Big) f. \\
    p^T A p = &
                \frac{1}{N}
                \sum_{u,v} A_{u,v}
                f^T \Big(
                \ind{}^{u,v} \otimes \widetilde{J_N} \Big) f
                = \frac{1}{N} f^T \Big[
                \big(\sum_{u,v} A_{u,v} \ind{}^{u,v}\big)
                \otimes \widetilde{J_N} \Big] f \\
    =& \frac{1}{N} f^T (A\otimes \widetilde{J_N}) f.
  \end{align*}
  The second-to-last
  identity follows from the bi-linearity of Kronecker
  product.
\end{proof}
\begin{claim}\label{thm:kron-is:1}
  $f^T (A \otimes \widetilde{J_N}) f
  \le |f^T A \otimes (B - \widetilde{J_N}) f|$.
\end{claim}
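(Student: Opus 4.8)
The plan is to extract the one property of $f$ that the preceding paragraph has not yet used: $f$ is the $\{0,1\}$-indicator of an independent set in $G\otimes H$, so no edge of $G\otimes H$ has both endpoints in its support. Since the adjacency matrix of $G\otimes H$ is $A_{G\otimes H}=A_G\otimes A_H$ and we normalized $B\triangleq\frac1d A_H$ (so that $A_H=dB$), this independence statement is precisely
\[
f^T(A\otimes B)f \;=\; \tfrac1d\, f^T(A\otimes A_H)f \;=\; 0 .
\]
Everything else is a one-line algebraic rearrangement.

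First I would add this vanishing quantity to the left-hand side and merge the two Kronecker terms using bilinearity of the Kronecker product (the same bilinearity already invoked in~\ref{thm:kron-is:0}):
\[
f^T(A\otimes \widetilde{J_N})f \;=\; f^T(A\otimes \widetilde{J_N})f - f^T(A\otimes B)f \;=\; f^T\big(A\otimes(\widetilde{J_N}-B)\big)f \;=\; -\,f^T\big(A\otimes(B-\widetilde{J_N})\big)f .
\]
Then, since any real number is at most its own absolute value,
\[
f^T(A\otimes \widetilde{J_N})f \;=\; -\,f^T\big(A\otimes(B-\widetilde{J_N})\big)f \;\le\; \big|\,f^T\big(A\otimes(B-\widetilde{J_N})\big)f\,\big| ,
\]
which is exactly the claimed inequality. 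One may additionally observe that the left-hand side is in fact nonnegative, because $A$, $\widetilde{J_N}$, and $f$ all have nonnegative entries, though this is not needed for the statement.

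I do not expect any genuine obstacle here: the entire content is the observation $f^T(A\otimes B)f=0$ together with a single application of bilinearity of $\otimes$. The only mild point to be careful about is bookkeeping — keeping the normalization factor $d$ straight, and remembering that the edges of $G\otimes H$ are governed by $A_G\otimes A_H$ rather than by $A_G$ or $A_H$ alone (which is also the feature exploited in~\ref{thm:large-girth}).
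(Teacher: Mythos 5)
Your proof is correct and follows the same route as the paper: split off the vanishing term $f^T(A\otimes B)f=0$ (which encodes that $f$ is an independent set in $G\otimes H$) using bilinearity of the Kronecker product, and then bound the remaining real number by its absolute value. The only cosmetic difference is the sign convention ($\widetilde{J_N}-B$ versus $B-\widetilde{J_N}$), which is immaterial under the absolute value.
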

\begin{proof}
  We have $f^T (A \otimes \widetilde{J_N}) f
  = f^T \Big[ A \otimes \big(\widetilde{J_N}-B)\Big] f
  + f^T (A\otimes B) f$.
  As noted above, $f$ being an independent
  set implies $f^T (A\otimes B) f = 0$:
  \[
  f^T (A \otimes \widetilde{J_N}) f
  = f^T \Big[ A \otimes \big(\widetilde{J_N}-B)\Big] f		
  \le |f^T A \otimes (\widetilde{J_N}-B) f|.
  \tag*{\qedhere}
  \]
\end{proof}
\begin{claim}\label{thm:kron-is:2}
  $|f^T A \otimes (B - \widetilde{J_N}) f| \le
  \frac{\Delta \rho}{d} \|f\|^2$.
\end{claim}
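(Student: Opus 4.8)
The plan is to bound the quadratic form $f^T \bigl(A \otimes (B - \widetilde{J_N})\bigr) f$ by the operator norm of the matrix $A \otimes (B - \widetilde{J_N})$ times $\|f\|^2$, and then exploit the Kronecker structure: the operator norm of a Kronecker product of symmetric matrices is the product of the two operator norms, since $A \otimes M$ is symmetric with eigenvalues exactly the products $\sigma_i(A)\sigma_j(M)$. So it suffices to show $\|A\|_{\mathrm{op}} \le \Delta$ and $\|B - \widetilde{J_N}\|_{\mathrm{op}} \le \rho/d$, after which
\[
\bigl| f^T \bigl(A \otimes (B - \widetilde{J_N})\bigr) f\bigr|
\le \|A \otimes (B-\widetilde{J_N})\|_{\mathrm{op}}\,\|f\|^2
= \|A\|_{\mathrm{op}}\,\|B-\widetilde{J_N}\|_{\mathrm{op}}\,\|f\|^2
\le \frac{\Delta\rho}{d}\,\|f\|^2,
\]
which is exactly the claim.

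For $\|A\|_{\mathrm{op}}\le\Delta$: $A = A_G$ is symmetric, and for any unit vector $x$ we have $x^T A x = \sum_{\{u,v\}\in E} 2 x_u x_v$, so $|x^T A x| \le \sum_{\{u,v\}\in E}(x_u^2 + x_v^2) = \sum_u d_G(u)\,x_u^2 \le \Delta$; hence all eigenvalues of $A$ lie in $[-\Delta,\Delta]$. For $\|B - \widetilde{J_N}\|_{\mathrm{op}}\le\rho/d$: since $H$ is $d$-regular, the all-ones vector $\ind{}$ of length $N$ is an eigenvector of $B = \tfrac1d A_H$ with eigenvalue $1$, and $\widetilde{J_N} = \tfrac1N \ind{}\ind{}^T$ is precisely the orthogonal projection onto $\mathrm{span}(\ind{})$ in $\R^N$. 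Thus $B$ and $\widetilde{J_N}$ are simultaneously diagonalizable: $B - \widetilde{J_N}$ annihilates $\ind{}$, and on $\ind{}^\perp$ it coincides with $B$, whose eigenvalues there all have absolute value at most $\rho/d$ by the definition of the spectral radius bound $\rho(H)\le\rho$. Being symmetric, $B - \widetilde{J_N}$ therefore has operator norm at most $\rho/d$.

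The only step that requires any thought — and it is really the conceptual point of the claim — is the observation that subtracting $\widetilde{J_N}$ from the normalized adjacency matrix $B$ exactly deletes the trivial top eigenvalue $1$ of the $d$-regular graph $H$, leaving only the ``nontrivial'' part of the spectrum, which is governed by $\rho$. Everything else (the bound $\|A\|_{\mathrm{op}}\le\Delta$, the multiplicativity of operator norms under $\otimes$, and passing from the operator norm to the quadratic form) is routine linear algebra.
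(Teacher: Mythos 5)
Your proof is correct and takes essentially the same approach as the paper: bound the quadratic form by the spectral radius (operator norm) of the Kronecker product, split it as $\rho(A)\cdot\rho(B-\widetilde{J_N})$, use $\rho(A)\le\Delta$, and observe that subtracting the projection $\widetilde{J_N}$ onto $\mathrm{span}(\ind{})$ kills the trivial eigenvalue $1$ of $B$, leaving $\rho(B-\widetilde{J_N})\le\rho/d$. The only difference is cosmetic — you phrase it via $\|\cdot\|_{\mathrm{op}}$ and give a slightly more explicit derivation of $\|A\|_{\mathrm{op}}\le\Delta$, while the paper uses the $\rho(\cdot)$ notation throughout.
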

\begin{proof}
  Define $C \triangleq B - \widetilde{J_N}$.
  For any  
  symmetric matrix $M$, let $\rho(M)$ be its 
  spectral radius, 
  $
  \rho(M) \triangleq \max_{p} \frac{|p^T M p|}{\|p\|^2}.
  $ Observe that $\rho(M) = 
  \max(|\sigma_{i}(M)|)$.
  We have:
  \[
  |f^T A \otimes (B - \widetilde{J_N}) f| 
  = |f^T A \otimes C f| 
  \le \rho(A\otimes B) \|f\|_2^2.
  \] 
  We know that the spectrum of 
  the Kronecker product of two symmetric matrices 
  correspond to the pairwise product of the spectrum of 
  corresponding matrices, i.e., 
  all eigenvalues of $A\otimes C$
  are of the form $\sigma_i(A) \cdot \sigma_j(C)$
  for each $i$ and $j$.
  Therefore, 
  \[\rho(A\otimes C) = \max( |\sigma_i(A) \sigma_j(C)| )
  = \max(|\sigma_i(A)|) \max(|\sigma_j(C)|)
  = \rho(A)\cdot \rho(C).
  \]
  Observe that $\rho(A) \le \Delta$, 
  since $A$ is the adjacency matrix of a graph with 
  degree $\le \Delta$.
  Now we will upper bound $\rho(C)$.
  Since $H$ is a regular graph
  and $B$ is its normalized adjacency matrix,
  the largest eigenvector of $B$ is all $1$'s
  and the corresponding eigenvalue is $1$.
  Therefore
  $C$ has the same eigenspace with $B$.		
  Moreover $C \ind{} = 0$, thus:
  \begin{align*}
    \rho(C) = & \max(|\sigma_i(C)|\ :\ 1\le i\le n)
                = \max(|\sigma_i(B)|\ :\ 2 \le i \le n)\\
    = & \max(\sigma_2(B), |\sigma_n(B)|) 
        = \frac{1}{d} \rho(G).
        \tag*{\qedhere}	
  \end{align*}
\end{proof}
%

We now prove the main theorem needed for our reduction.	
\begin{theorem}
  \label{thm:sparse-vc-to-tri-free}
  Given a graph $G=(V,E)$ with maximum degree $\Delta$,
  for any $\eps > 0$,
  we can
  construct in polynomial time,
  a triangle-free graph
  $\widehat{G}=(\widehat{V}, \widehat{E})$ with
  \[\rindset(G) \le \rindset( \widehat{G} ) \le
  (1+\eps) \rindset(G).\]
  Moreover $\widehat{G}$ has
  \begin{inparaenum}[(a)]
  \item $\mathrm{poly}(\Delta, {\eps}^{-1}) |V|$ nodes,
  \item degree
    $O\big(\displaystyle \Delta^3 \eps^{-2} \big)$.
  \end{inparaenum}
\end{theorem}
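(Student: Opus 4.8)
The plan is to instantiate the Kronecker-product machinery of \ref{thm:kron-is} with a well-chosen auxiliary graph. Concretely, I would set $\widehat{G} \triangleq G \otimes H$, where $H$ is a $d$-regular, \emph{triangle-free} graph whose spectral radius is as small as possible relative to $d$ — ideally of Ramanujan quality, $\rho(H) \le 2\sqrt{d}$. Then \ref{thm:large-girth} immediately yields that $\widehat{G}$ is triangle-free, and \ref{thm:kron-is} gives the two-sided bound
\[
\rindset(G) \le \rindset(\widehat{G}), \qquad \rindset(G) \ge \Big(1 - \tfrac{\rho(H)\Delta}{2d}\Big)\rindset(\widehat{G}).
\]
The second inequality rearranges to $\rindset(\widehat{G}) \le \big(1 - \tfrac{\rho(H)\Delta}{2d}\big)^{-1}\rindset(G)$, so it suffices to choose $d$ large enough that $\tfrac{\rho(H)\Delta}{2d} \le \tfrac{\eps}{1+\eps}$; using $\rho(H) \le 2\sqrt d$ this amounts to taking $d = \Theta(\Delta^2 \eps^{-2})$, and then $\rindset(G) \le \rindset(\widehat{G}) \le (1+\eps)\rindset(G)$ as required.

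The parameter bounds then follow by bookkeeping. The degree of a vertex $(u,i)$ in $G \otimes H$ is $d_G(u)\cdot d \le \Delta d = O(\Delta^3 \eps^{-2})$, giving claim (b); and $|\widehat{V}| = |V|\cdot|V(H)|$, so claim (a) reduces to exhibiting such an $H$ on $\mathrm{poly}(d) = \mathrm{poly}(\Delta,\eps^{-1})$ vertices, constructible in deterministic polynomial time.

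That last point is the crux, and the step I expect to be the main obstacle. I would invoke an explicit family of Ramanujan graphs — for instance the Lubotzky–Phillips–Sarnak / Margulis graphs $X^{p,q}$, which are $(p+1)$-regular with every nontrivial eigenvalue bounded in absolute value by $2\sqrt p$ — choosing the parameters so that the graph is \emph{non-bipartite} (otherwise the eigenvalue $-d$, with an eigenvector orthogonal to the all-ones vector, forces $\rho(H)=d$ and the argument collapses) and has girth strictly greater than $3$, hence is triangle-free; both conditions hold once $q$ is a sufficiently large polynomial in $p$, in which case $H$ has $\Theta(q^3)$ vertices, i.e. $\mathrm{poly}(p) = \mathrm{poly}(d)$ vertices. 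Bertrand's postulate supplies a prime $p$ with $p+1$ in the target window $\Theta(\Delta^2\eps^{-2})$, so such an $H$ exists and can be written down in time polynomial in $\Delta$ and $\eps^{-1}$. Any explicit expander with $\rho(H) = O(\sqrt d)$ would do equally well; the degree bound $O(\Delta^3\eps^{-2})$ in the statement is essentially what pins the requirement to Ramanujan-quality spectral expansion. The only genuine subtleties I anticipate are verifying that the available regularities $d = p+1$ are dense enough to hit the required range and confirming the girth and non-bipartiteness conditions for the chosen family — the independence-number and spectral estimates are entirely absorbed into \ref{thm:kron-is}.
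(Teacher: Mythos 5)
Your proposal follows exactly the same route as the paper: take $\widehat{G} = G\otimes H$ for an explicit $d$-regular Ramanujan graph $H$ of girth $>3$ with $d = \Theta(\Delta^2\eps^{-2})$, invoke Proposition~\ref{thm:large-girth} for triangle-freeness, invoke Lemma~\ref{thm:kron-is} for the two-sided independence-number bound, and do the bookkeeping on $|\widehat{V}|$ and degree. The one thing you make explicit that the paper leaves implicit is the non-bipartiteness requirement on $H$ (so that $\sigma_n(H) \neq -d$ and the spectral radius as defined here is genuinely $O(\sqrt d)$) — a worthwhile observation, but not a different argument.
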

\begin{proof}
  For any $d$ and $N$,
  it is known how to 
  construct~\cite{lps88,mor94} in deterministic
  polynomial time,
  a $O(d)$-regular 
  Ramanujan graph $H$ with
  girth $\Omega(\log_d N)$ and
  spectral radius at most $\rho \le O(\sqrt{d})$.
  Thus for some choice of $d = O( \Delta^2 \eps^{-2} )$
  and $N = d^{O(1)} = \mathrm{poly}(\Delta,\eps^{-1})$,
  we can find a $d$-regular graph
  $H$ with girth at least $\Omega(1)$
  and spectral radius
  $\rho \le d \eps / \Delta$.
  For such $H$, let $\widehat{G} \gets G \otimes H$.
  We have
  $\Big(1 - \frac{\rho \Delta}{2 d} \Big)^{-1}
  \le \big(1 - \eps/2 \big)^{-1}
  \le 1 + \eps
  $.
  \ref{thm:large-girth} implies
  $G\otimes H$ is triangle free.
  By~\ref{thm:kron-is}:
  \[\rindset(G) \le \rindset( G \otimes H ) \le
  \Big(1 - \frac{\rho \Delta}{2 d} \Big)^{-1} \rindset(G)
  \le \Big(1 + \eps \Big) \rindset(G).
  \]
  Now we prove the remaining properties:
  \begin{enumerate}[(a)]
  \item $|V(G\otimes H)| = |V(G)| \cdot |V(H)|
    \le |V| \cdot \mathrm{poly}(\Delta,\eps^{-1})$.
  \item $d_{\max}(G\otimes H)
    \le
    d_{\max}(G) \times d_{\max}(H)
    \le O(\Delta d) = O(\Delta^3 \eps^{-2})$.
    \qedhere
  \end{enumerate}
\end{proof}
\begin{note}
  Noga Alon has provided an alternate construction where one can
  obtain a triangle free graph $\hat{G}$ such that
  $\rindset(\hat{G}) = \rindset(G)$. This however, does not lead to
  improved constant in our analysis. For the sake of completeness, we
  include the alternate theorem in the Appendix~(See
  Theorem~\ref{thm:noga-construction}).
\end{note}
We will end the section with the proof
of~\ref{thm:vc-tri-free-hardness}.  We need the following hardness
result of~\cite{ds05}: It follows from their Corollary 2.3 and
Appendix 8 (weighted to unweighted reduction).  As noted
in~\cite{ds05}, the construction produces bounded degree graphs.
\begin{theorem}[Dinur, Safra~\cite{ds05}]
  \label{thm:vc-hard-ds}
  For any constant $\eps > 0$,
  given any unweighted graph $G$ with bounded degrees,
  it is NP-hard to distinguish between:
  \begin{itemize}
  \item (Yes) $\rindset(G) > c - \eps$,
  \item (No) $\rindset(G) < s + \eps$; 
  \end{itemize} where $c$ and $s$ are constants 
  such that $\frac{1-s}{1-c} \approx 1.36$.
\end{theorem}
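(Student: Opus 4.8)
The plan is to follow the argument of Dinur and Safra~\cite{ds05}: a PCP-style reduction from a gap version of Label Cover to a gap version of independent set, built around a long-code gadget on the $p$-biased Boolean cube and a Frankl--R\"{o}dl--type analysis of set systems, with the constant $10\sqrt5-21\approx1.36067$ emerging only at the end from an optimization over the gadget parameters. First I would invoke the PCP theorem and Raz's parallel repetition theorem to obtain, for every constant $\eta>0$, a constant label-set size $R=R(\eta)$ such that for a regular bipartite constraint graph with projection constraints $\pi_e$ it is NP-hard to decide whether some labeling satisfies all constraints or no labeling satisfies more than an $\eta$ fraction; I would also arrange the mild smoothness of the projections that is convenient for the decoding step.

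Next I would build the gadget. Fix a bias $p\in(0,1/2)$, and replace each Label Cover variable $v$ by a cloud $C_v$ whose vertices are the subsets $S\subseteq[R]$ carrying $p$-biased weight $p^{|S|}(1-p)^{R-|S|}$. Within a cloud, join $S$ and $T$ whenever they satisfy a ``forbidden configuration'' relation (a disjointness- or forbidden-intersection-type condition whose maximal independent sets are, up to the Frankl--R\"{o}dl theorem, dictatorships $\{S:i\in S\}$); across a constraint $e=\{u,v\}$, join $S\in C_u$ to $T\in C_v$ whenever $\pi_u(S)$ and $\pi_v(T)$ satisfy the same relation after projection. Then any independent set restricts in each cloud to such a ``good'' family, and clouds using dictators consistent under the projections contribute no cross-edges. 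A weighted-to-unweighted reduction (replacing each weight by a proportional number of parallel copies of the vertex) and the fact that $R$ is constant, so the graph has bounded degree, produce the unweighted bounded-degree graph of the statement; these routine steps are the ones referenced as Appendix~8 of~\cite{ds05}.

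For completeness, a labeling $\ell$ satisfying the Label Cover instance yields the independent set given by the dictator families $\{S:\ell(v)\in S\}$, so a Yes instance has $\rindset(\widehat{G})\ge c-\eps$ for $c$ the density of the best globally consistent dictator construction. For soundness, suppose the instance is a No instance but $\widehat{G}$ has an independent set $I$ of density $s+\eps$; averaging over clouds and using Markov's inequality, an $\Omega(\eps)$ fraction of clouds $C_v$ have $I\cap C_v$ a ``good'' family of density bounded away from $0$. Here I would invoke the Frankl--R\"{o}dl--type stability theorem: above a density threshold determined by $p$, every ``good'' family of that density is $\eps'$-close to a junta depending on a bounded number $j=j(p,\eps)$ of coordinates. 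Choosing $\ell(v)$ uniformly from those coordinates, the scarcity of cross-edges across a constraint between two such clouds forces the corresponding coordinate sets to project onto overlapping sets (else one exhibits a cross-edge), so $\ell$ satisfies that constraint with probability $\Omega(1/j^2)$; summing over the $\Omega(\eps)$ fraction of good clouds, $\ell$ satisfies a constant fraction of all constraints, contradicting the No promise once $\eta$ is chosen small enough. The value $s$ is exactly the threshold below which this decoding can fail, and one checks $\tfrac{1-s}{1-c}\approx10\sqrt5-21$ by optimizing $p$ (and, as in Dinur--Safra, a further amplification parameter obtained from a graph product of the gadget).

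The hard part will be the Frankl--R\"{o}dl--type stability theorem driving the soundness analysis: one must show that above the relevant density threshold the ``good'' families are close to bounded juntas, with quantitative control tight enough that the optimized gap is exactly $10\sqrt5-21$ and not something weaker. Dinur and Safra establish this combinatorially, via the Kruskal--Katona theorem together with a shifting/compression argument rather than Fourier analysis; carefully tracking the closeness parameters through the decoding, and performing the parameter optimization, is the most delicate part.
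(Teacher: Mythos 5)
You should first note that the paper does not prove this statement at all: it is imported verbatim as a black box from Dinur--Safra, with the remark that it ``follows from their Corollary 2.3 and Appendix 8 (weighted to unweighted reduction)'' and that their construction already has bounded degree. So there is no proof in the paper to compare against; what you have written is an attempt to reconstruct the cited proof. As such a reconstruction, your outline is structurally faithful to Dinur--Safra: PCP plus parallel repetition with a smoothness property, a $p$-biased long-code gadget whose independent sets correspond to intersecting-type families, a junta-based decoding for soundness, and the routine weighted-to-unweighted and bounded-degree observations. That is the right architecture, and correctly identifying the stability theorem for dense ``good'' families as the crux is fair.

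However, two points in your sketch are off as descriptions of how the cited proof actually goes, and since the entire burden of the theorem rests on that reference, they matter. First, the stability theorem in Dinur--Safra is \emph{not} established ``via Kruskal--Katona and shifting rather than Fourier analysis'': after a monotone shifting step they invoke Friedgut's junta theorem, which is a Fourier-analytic (hypercontractivity-based) result, and the extremal densities that determine $s$ and the optimal bias $p=(3-\sqrt{5})/2$ come from the Ahlswede--Khachatrian complete intersection theorem for $2$-intersecting families (Frankl--R\"odl enters separately, for the forbidden-intersection part of the gadget). Second, the final amplification is not a graph product of the gadget but the block/co-partite structure over $\ell$-tuples of variables. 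Beyond these, everything quantitatively delicate --- the exact values of $c$ and $s$, the smoothness of the projections needed for decoding, and the parameter optimization yielding $10\sqrt{5}-21$ --- is deferred in your sketch, so it does not constitute a self-contained proof; but for the purposes of this paper that is acceptable, since the authors themselves treat Theorem~\ref{thm:vc-hard-ds} purely as a citation.
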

\begin{proof}[Proof of~\ref{thm:vc-tri-free-hardness}]
  Given a bounded degree
  graph $G$, consider the graph $\widehat{G}$ given
  by~\ref{thm:sparse-vc-to-tri-free} for some small
  constant $\eps_0 < \eps$.
  Since $G$ is bounded degree and $\eps_0$ is constant,
  $\widehat{G}$ is also bounded degree.
  Furthermore, $\widehat{G}$ satisfies
  $\rindset(G) \le \rindset(\widehat{G}) \le
  (1+\eps_0) \rindset(G)$.
  Completeness follows immediately: $\rindset(\widehat{G})
  > c - \eps$.
  For the soundness, suppose
  $\rindset(\widehat{G}) > s + \eps$.
  Then $\rindset(G) \ge \frac{s+\eps}{1+\eps_0}
  \ge s + \eps$ for suitable $\eps_0$.
  The hardness of \minvc\ follows from \ref{thm:vc-is-dual}.
\end{proof}

\section{Conclusions}
In this paper we provide the first hardness of approximation for the
fundamental Euclidean $k$-means problem. Although our work clears a
major hurdle of going beyond NP-hardness for this problem, there is
still a big gap in our understanding with the best upper bound being a
factor $(9+\epsilon)$. We believe that our result and techniques will
pave way for further work in closing this gap. Our reduction from
vertex cover produces high dimensional instances~($d = \Omega(n)$) of
$k$-means. However, by using the Johnson-Lindenstrauss
transform~\cite{dasgupta2003elementary}, we can project the instance
onto $O(\log n/\epsilon^2)$ dimensions and still preserve pairwise
distances by a factor $(1+\epsilon)$ and the $k$-means cost by a
factor of $(1+\epsilon)^2$. We leave it as an open question to
investigate inapproximability results for $k$-means in constant
dimensions. It would also be interesting to study whether our
techniques give hardness of approximation results for the Euclidean
$k$-median problem. Finally, our hardness reduction
in~\ref{sec:vc-girth} provides a novel analysis by using the spectral
properties of the underlying graph to argue about independent sets in
graph products -- this connection could have applications beyond the
present paper.

\section{Acknowledgments}
We would like to thank Noga Alon and Oded Regev for valuable feedback
on the results in \ref{sec:vc-girth}, in particular for suggesting
alternate proofs of Theorem~\ref{thm:large-girth} and
Theorem~\ref{thm:kron-is}. We would also like to thank Noga for
pointing out that the graph product construction in \ref{sec:vc-girth}
does not eliminate even cycles.

\bibliographystyle{plain}
\bibliography{paper}
\appendix

\section{Appendix}
\begin{theorem}
  \label{thm:noga-construction}
  Let $G = (V,E)$ be an arbitrary graph with maximum degree
  $\Delta$. It is possible to construct in polynomial time a triangle
  free graph $\hat{G}$ such that $\rindset(\hat{G}) = \rindset(G)$.
\end{theorem}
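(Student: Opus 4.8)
The plan is to make $G$ triangle-free in two moves: first \emph{double-subdivide} every edge, which yields a triangle-free graph whose independence ratio has been shifted towards $\tfrac12$, and then take a disjoint union with many copies of a fixed triangle-free ``tuning'' graph $F$, chosen so that the overall independence ratio returns to exactly $\rindset(G)$.

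First I would analyze the double subdivision $S_2(G)$, obtained from $G$ by replacing each edge $\{u,v\}$ with a path $u - x_{uv} - y_{uv} - v$ through two new vertices. Since an internal vertex of such a path has degree $2$, any cycle of $S_2(G)$ must traverse each subdivided edge fully, so cycles of $S_2(G)$ correspond to cycles of $G$ with lengths multiplied by $3$; hence $S_2(G)$ has girth at least $9$ (or is acyclic, as is $G$), and in particular is triangle-free. The key identity I would establish is $\indset(S_2(G)) = \indset(G) + |E|$. For the upper bound: if $I$ is independent in $S_2(G)$ and $I_V = I\cap V$, then each path $u - x_{uv} - y_{uv} - v$ contributes at most one internal vertex to $I$, and none when both $u,v\in I_V$, so $|I| \le |I_V| + |E| - |E(G[I_V])|$; combining this with the elementary bound $|I_V| - |E(G[I_V])| \le \indset(G[I_V]) \le \indset(G)$ (every graph on $p$ vertices and $q$ edges has an independent set of size $\ge p-q$) gives $|I| \le \indset(G) + |E|$. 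For the matching lower bound: take a maximum independent set $S$ of $G$ and, for each edge $\{u,v\}$, add $x_{uv}$ when $u\notin S$ and $y_{uv}$ otherwise (in which case $v\notin S$ because $S$ is independent); this produces an independent set of size $\indset(G) + |E|$.

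Next I would construct the tuning graph. Let $\beta = \rindset(G)$; we may assume $\beta < 1$ (otherwise $G$ is edgeless and $\hat G = G$ already works). The aim is a triangle-free $F$, on polynomially many vertices, with $\dfrac{\indset(F) + 1}{|V(F)| + 2} = \beta$. Writing $\beta = p/q$ (with $q$ taken to be a sufficiently large multiple of the reduced denominator), it is enough to produce a triangle-free $F$ with $\indset(F) = p-1$ and $|V(F)| = q-2$. If $\beta \ge \tfrac12$ one may simply take $F = K_{p-1,\,q-p-1}$, a complete bipartite graph (empty side allowed): it is triangle-free, has $q-2$ vertices, and $\indset(F) = \max(p-1,\,q-p-1) = p-1$ since $2p\ge q$. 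If $\beta < \tfrac12$ one instead takes a triangle-free graph of small independence ratio --- triangle-free graphs on $N$ vertices with independence number $o(N)$ exist and are explicitly constructible (for instance high-girth Ramanujan graphs, whose independence ratio tends to $0$ as the degree grows, by the Hoffman bound) --- and pads it with a controlled number of isolated vertices (each adding $1$ to both the independence number and the vertex count) and disjoint edges (each adding $1$ to the independence number and $2$ to the vertex count) so as to hit $(p-1,\,q-2)$ exactly; the required nonnegative counts exist by a routine linear computation once $q$ is large enough.

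Finally, I would set $\hat G := S_2(G) \ \sqcup\ \big(|E|\text{ disjoint copies of } F\big)$. As a disjoint union of triangle-free graphs this is triangle-free, and it has $\mathrm{poly}(|V|,\Delta)$ vertices. By the identity above, $\indset(\hat G) = \big(\indset(G) + |E|\big) + |E|\cdot\indset(F)$ and $|V(\hat G)| = \big(|V| + 2|E|\big) + |E|\cdot|V(F)|$; substituting $\indset(F)+1 = \beta(|V(F)|+2)$ and $\indset(G) = \beta|V|$ yields $\indset(\hat G) = \beta\,|V(\hat G)|$, i.e. $\rindset(\hat G) = \beta = \rindset(G)$. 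The conceptual core is the exact evaluation $\indset(S_2(G)) = \indset(G)+|E|$; the main obstacle is the tuning step in the regime $\rindset(G) < \tfrac12$, which forces us to invoke (and constructively exhibit) triangle-free graphs of arbitrarily small independence ratio in order to pull the ratio all the way down to the prescribed value. (For the hardness application one fixes $F$ according to the target threshold rather than to $\rindset(G)$ itself, so that only the relevant threshold inequality --- not the exact ratio --- is preserved; this is why, as noted in the remark following Theorem~\ref{thm:sparse-vc-to-tri-free}, the present construction does not improve the inapproximability constant.)
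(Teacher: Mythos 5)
Your double-subdivision identity $\indset(S_2(G)) = \indset(G) + |E|$ is correct and cleanly proved (the lower bound via choosing the ``far'' internal vertex and the upper bound via $|I_V| - |E(G[I_V])| \le \indset(G)$ are both right), and the arithmetic of the tuning step also checks out. But there is a fundamental gap that prevents this from proving the theorem as stated: your tuning graph $F$ must satisfy $\frac{\indset(F)+1}{|V(F)|+2} = \beta$ with $\beta = \rindset(G)$, so to set the parameters $p,q$ and choose $F$ you need to \emph{know} $\indset(G)$, which is NP-hard to compute. Consequently the procedure you describe is not a polynomial-time construction of $\hat{G}$ from $G$. You acknowledge this in a closing parenthetical (``for the hardness application one fixes $F$ according to the target threshold''), but that only yields a promise/threshold preservation, not the exact-equality statement $\rindset(\hat{G}) = \rindset(G)$ produced by a polynomial-time algorithm.

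The paper's proof (attributed to Noga Alon) takes a genuinely different route that sidesteps this issue. It sets $\hat{G} := G \otimes H$ for a single explicit $(n,d,\lambda)$-expander $H$ (with $\lambda \le 2\sqrt{d-1}$, $d/(2\lambda) \ge \Delta$, and girth $\ge 4$) chosen as a function of $\Delta$ alone, and then argues, via the expansion properties in Lemma~\ref{lem:facts-about-expanders} together with a star decomposition around a maximal independent set of $G$, that $\rindset(G\otimes H) \le \rindset(G)$; the reverse inequality is the trivial blow-up $I \mapsto I \times U$. The conceptual difference is exactly where your argument breaks: the product construction preserves the independence ratio intrinsically and obliviously, whereas your subdivision shifts the ratio to $\frac{\indset(G)+|E|}{|V|+2|E|}$ (a $G$-dependent amount) and then requires an external correction calibrated to the very quantity the theorem is supposed to preserve without computing it. If you want to salvage the subdivision idea you would need a correction gadget whose effect on the ratio is self-calibrating, which the disjoint-union padding is not.
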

Before proving the theorem, we need the following standard facts about
$(n,d,\lambda)$ graphs. The following proofs are suggested by Noga
Alon.
\begin{mylemma}
  \label{lem:facts-about-expanders}
  Let $H=(U,F)$ be an $(n,d,\lambda)$ graph, assume $\lambda < d/4$
  and let $B$ be a set of vertices of $H$. Let $N(B)$ denote the set
  of all neighbors of $B$ in $H$. Then:
  \begin{enumerate}
  \item If $|B| > \frac{\lambda}{d}n$ then $|N(B)| > n-\frac{\lambda}{d}n$
  \item If $|B| \leq \frac{\lambda}{d}n$ then $|N(B)| \geq \frac{\lambda}{2d}n$
  \end{enumerate} 
\end{mylemma}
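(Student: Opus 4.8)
The plan is to read off both parts from the \emph{expander mixing lemma}, the standard spectral property of $(n,d,\lambda)$-graphs: for all $S,T\subseteq U$,
\[
\Bigl| e(S,T) - \tfrac{d}{n}\,|S|\,|T| \Bigr| \;\le\; \lambda\,\sqrt{|S|\,|T|},
\]
where $e(S,T)$ is the number of edges of $H$ with one endpoint in $S$ and the other in $T$ (edges inside $S\cap T$ counted twice, so that $e(S,U)=d|S|$). Once this is recorded, each of the two parts is a one-line deduction followed by a short case check.

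For part~(1), I would set $C \triangleq U\setminus N(B)$. By the definition of $N(B)$ there is no edge of $H$ between $B$ and $C$, so $e(B,C)=0$, and the mixing lemma gives $\tfrac{d}{n}|B|\,|C|\le \lambda\sqrt{|B|\,|C|}$, i.e.\ $|B|\cdot|C|\le (\lambda n/d)^2$. Since we assume $|B|>\tfrac{\lambda}{d}n$, this forces $|C|<\tfrac{\lambda}{d}n$, hence $|N(B)|=n-|C|>n-\tfrac{\lambda}{d}n$, which is exactly the claim.

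For part~(2), the starting observation is that every edge of $H$ incident to a vertex of $B$ has its other endpoint in $N(B)$, so $e(B,N(B))=d|B|$. Applying the mixing lemma to $S=B$, $T=N(B)$ and writing $b=|B|$, $t=|N(B)|$, we get $db\le \tfrac{d}{n}\,b\,t+\lambda\sqrt{bt}$. If $t\ge n/2$ we are already done, since $\lambda<d/4$ gives $\tfrac{\lambda}{2d}<\tfrac12$. Otherwise $t<n/2$, so $1-t/n>\tfrac12$ and the inequality rearranges to $\tfrac{db}{2}<\lambda\sqrt{bt}$, hence $t>\tfrac{d^2}{4\lambda^2}\,b$. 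Combining the cases, $|N(B)|\ge \min\{\,n/2,\ \tfrac{d^2}{4\lambda^2}|B|\,\}$; because $\lambda<d/4$ makes $\tfrac{d^2}{4\lambda^2}>4$, this lower bound exceeds $\tfrac{\lambda}{2d}n$ whenever $|B|$ is at least a constant fraction of the threshold $\tfrac{\lambda}{d}n$ — in particular in the regime relevant to \ref{thm:noga-construction}, where the sets $B$ are linear in $n$ and $d,\lambda/d$ are constants.

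The only real obstacle is the bookkeeping in part~(2): the mixing lemma produces a \emph{multiplicative} expansion of $N(B)$ relative to $B$, whereas the lemma is phrased as the \emph{absolute} bound $\tfrac{\lambda}{2d}n$, so one must (a) peel off the case in which $N(B)$ is already a constant fraction of $U$, and (b) use $\lambda<d/4$ to convert the multiplicative gain $d^2/4\lambda^2$ into the stated bound over the range of $|B|$ that is actually used. Part~(1) is immediate, and the convention issues in the mixing lemma (ordered vs.\ unordered pairs, edges inside $B$) are routine and do not move any constant.
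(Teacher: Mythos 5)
Your Part~(1) is correct and is in fact more self-contained than the paper's, which simply cites Corollary~1 of Alon~\cite{alon1992construction}; the complement trick with $C=U\setminus N(B)$ and $e(B,C)=0$ is exactly the right way to get it from the mixing lemma directly.

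For Part~(2), your mixing-lemma computation is also the right one, and it matches what the paper does in the small-$B$ regime (the paper additionally splits at $|B|=\tfrac{2\lambda^2}{d^2}n$ and cites the same Alon corollary for the range $\tfrac{2\lambda^2}{d^2}n \le |B| \le \tfrac{\lambda}{d}n$, but your $t\ge n/2$ / $t< n/2$ dichotomy covers that range directly, so the two proofs are essentially the same up to which clean-up step is cited versus derived). The genuine issue is the one you half-noticed but did not resolve: the bound you obtain, $|N(B)|\ge \min\{n/2,\ \tfrac{d^2}{4\lambda^2}|B|\}$, implies the stated absolute bound $\tfrac{\lambda}{2d}n$ only when $|B|\ge\tfrac{2\lambda^3}{d^3}n$, and neither your proof nor the paper's establishes the absolute bound below that threshold; indeed, for a singleton $B$ one has $|N(B)|=d$, which is far smaller than $\tfrac{\lambda}{2d}n$ once $n$ is large, so Part~(2) as literally written is false. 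Your escape route---``the sets $B$ are linear in $n$ in the regime relevant to Theorem~\ref{thm:noga-construction}''---is not correct either: the sets $B_{v'}$ appearing there are only upper-bounded by $\tfrac{\lambda}{d}n$ and may be arbitrarily small. What actually saves the application is that Theorem~\ref{thm:noga-construction} never uses the absolute bound; it uses the \emph{multiplicative} bound $|N(B)|\ge\tfrac{d}{2\lambda}|B|$ (for $|B|\le\tfrac{\lambda}{d}n$), which is exactly what your mixing-lemma calculation delivers, since $\tfrac{d^2}{4\lambda^2}\ge\tfrac{d}{2\lambda}$ when $\lambda< d/2$, and since $\tfrac{d}{2\lambda}|B|\le n/2$ in that range, the $t\ge n/2$ case is subsumed as well. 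So the right conclusion to draw is not that the proof can be patched to give the stated absolute bound, but that the lemma statement should be read (and would be better stated) as the multiplicative bound $|N(B)|\ge\tfrac{d}{2\lambda}|B|$, which is both true and the quantity actually invoked downstream.
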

\begin{proof}
  Part $1$ is proved in Corollary $1$ in
  \cite{alon1992construction}. Part $2$, for
  $\frac{2\lambda^2}{d^2}n \leq |B| \leq \frac{\lambda}{d}n$ follows
  from the same corollary (which implies that in this range
  $|N(B)| \geq \frac n 2$). For $|B| \leq \frac{2\lambda^2}{d^2}n$,
  the result follows from the expander mixing lemma (see
  \cite{alon2004probabilistic}, corollary $9.2.5$), as there are
  $d|B|$ edges between $B$ and $N(B)$.
\end{proof}
We now provide the proof of Theorem~\ref{thm:noga-construction}.
\begin{proof}
  Let $H = (U,F)$ be a $(n,d,\lambda)$-expander with
  $\lambda \leq
  2\sqrt{d-1}$\footnote{This
    means that all eigenvalues of $H$, except the first, are bounded
    by $lambda$.}
  Let $\hat{G} = G \otimes H$. Further, let
  $\frac d {2\lambda} \geq \Delta$. It is well known that such graphs
  exist. It is easy to see that any
  $\rindset(G \otimes H) \geq \rindset(G)$, since any independent set
  $S$ in $G$ leads to an independent set $S \otimes U$ in
  $G \otimes H$.

  For the other direction, let $S \subset V \times U$ be an
  independent set in $G \otimes H$. Define
  \[
  T = \{ v \in V: |\{u \in U: (v,u) \in S\}| \geq \frac{\lambda}{d}n\}
  \]
  Be Lemma~\ref{lem:facts-about-expanders} part $1$, $T$ is an
  independent set in $G$. Let $T'$ be a maximal (with respect to
  containment) independent set in $G$ that contains $T$. By
  maximality, every vertex in $V\setminus T'$ has at least one
  neighbor in $T'$. Thus $T'$ is a dominating set in $G$ and there is
  a collection of stars $\{S_v: v \in T'\}$, covering all the vertices
  of $G$. As $T'$ is an independent set, $|T'| \leq
  \rindset(G)|V|$.
  To complete the proof it suffices to show that for each of the stars
  $S_v$ in our collection whose set of vertices in $G$ is $V_v$
  \begin{eqnarray}
    \label{app:eq1}
    |\{(v',u): (v',u) \in S, v' \in V_v\}| \leq |U| = n
  \end{eqnarray}
  The number of leaves of the star $S_v$ is at most $\Delta$. For each
  such leaf $v'$, the set of vertices of $H$ given by
  \[
  B_{v'} = \{u \in U: (v',u) \in S\}
  \]
  is of cardinality smaller than $\frac{\lambda}{d}n$. Moreover, all
  its neighbors in $H$ cannot belong to the set
  $B_v = \{u \in U: (v,u) \in S\}$ where $v$ is the center of the star
  $S_v$. By Lemma~\ref{lem:facts-about-expanders} part $2$, the number
  of these neighbors is at least $\frac{d}{2\lambda} \geq \Delta$
  times the cardinality of $B_{v'}$. This implies that the total size
  of all sets $B_{v'}$ where the sum ranges over all leaves $v'$ of
  $S_v$ is at most the number of vertices in $U - B_v$, implying
  \ref{app:eq1} and completing the proof.
\end{proof}

\end{document}